\newtheorem{theorem}{Theorem}[section]
\newtheorem{corollary}[theorem]{Corollary}
\newtheorem{lemma}[theorem]{Lemma}
\newtheorem{proposition}[theorem]{Proposition}
\newtheorem{definition}{Definition}[section]
\newtheorem{example}{Example}[section]
\DeclareMathOperator{\argmin}{argmin}
\DeclareMathOperator{\argmax}{argmax}
\providecommand{\SET}[1]{\ensuremath{\{ #1 \}}\xspace}
\providecommand{\Abs}[1]{\ensuremath{| #1 |}\xspace}
\providecommand{\Set}[2]{\ensuremath{\SET{#1 \mid #2}}\xspace}
\providecommand{\half}{\ensuremath{\frac{1}{2}}\xspace}
\providecommand{\Kth}[1]{\ensuremath{{#1}^{\rm th}}}
\providecommand{\Ceiling}[1]{\ensuremath{\lceil {#1} \rceil}\xspace}
\providecommand{\Floor}[1]{\ensuremath{\lfloor {#1} \rfloor}\xspace}
\providecommand{\PROB}{\ensuremath{{\rm Pr}}\xspace}
\providecommand{\Prob}[2][]{\ensuremath{%
\ifthenelse{\equal{#1}{}}{\PROB[#2]}{\PROB_{#1}[#2]}}\xspace}
\providecommand{\Expect}[2][]{\ensuremath{%
\ifthenelse{\equal{#1}{}}{\mathbb{E}}{\mathbb{E}_{#1}}%
\left[#2\right]}\xspace}
\newcommand{\Event}[1][]{\ensuremath{%
\ifthenelse{\equal{#1}{}}{\mathcal{E}}{\mathcal{E}_{#1}}}\xspace}
\def\QED{{\phantom{x}} \hfill \ensuremath{\rule{1.3ex}{1.3ex}}}
\newcommand{\e}{\mathrm{e}}
\newcommand{\dd}{\mathrm{d}}
\newcommand{\RR}{\mathbb{R}}
\newcommand{\NN}{\mathbb{N}}
\newcommand{\QQ}{\mathbb{Q}}
\newcommand{\set}[1]{\left\{ #1 \right\}}
\newcommand{\intersect}{\cap}
\newcommand{\floor}[1]{\lfloor {#1} \rfloor}
\newcommand{\ceil}[1]{\lceil {#1} \rceil}
\renewcommand{\hat}{\widehat}
\newcommand{\eqdef}{:=}
\newcommand{\NUMCAND}{\ensuremath{n}\xspace}
\newcommand{\NUMCANDH}{\ensuremath{\hat{n}}\xspace}
\newcommand{\NUMCANDZ}{\ensuremath{n_0}\xspace}
\newcommand{\VPROB}{\ensuremath{\bm{q}}\xspace}
\newcommand{\VProb}[1]{\ensuremath{q_{#1}}\xspace}
\newcommand{\DIST}{\ensuremath{d}\xspace}
\newcommand{\Dist}[2]{\ensuremath{d_{#1,#2}}\xspace}
\newcommand{\Cost}[1]{\ensuremath{c_{#1}}\xspace}
\newcommand{\Dstr}[1]{\ensuremath{D(#1)}\xspace}
\newcommand{\Winner}[1]{\ensuremath{w(#1)}\xspace}
\newcommand{\Opt}[1]{\ensuremath{o(#1)}\xspace}
\newcommand{\OPT}{\ensuremath{\hat{o}}\xspace}
\newcommand{\Score}[1]{\ensuremath{{\sigma}(#1)}\xspace}
\newcommand{\Pos}[2][]{\ensuremath{%
    \ifthenelse{\equal{#1}{}}{\pi(#2)}{\pi_{#1}(#2)}}\xspace}
\newcommand{\SCF}[1][]{\ensuremath{
    \ifthenelse{\equal{#1}{}}{g}{g_{#1}}}\xspace}
\newcommand{\ScF}[2][]{\ensuremath{\SCF[#1](#2)}\xspace}
\newcommand{\SCFP}[1][]{\ensuremath{%
    \ifthenelse{\equal{#1}{}}{g'}{g'_{#1}}}\xspace}
\newcommand{\SCC}{\ensuremath{c}\xspace} 
\newcommand{\Ball}[2]{\ensuremath{B(#1,#2)}\xspace}
\newcommand{\VS}{\ensuremath{\mathcal{V}}\xspace} 
\newcommand{\AV}{\ensuremath{\Omega}\xspace}
\newcommand{\VP}{\ensuremath{\omega}\xspace} 
\newcommand{\VPP}{\ensuremath{\omega'}\xspace}
\newcommand{\Vol}[1]{\ensuremath{H(#1)}\xspace} 
\newcommand{\tr}{\ensuremath{\hat{r}}\xspace} 
\newcommand{\SBM}{\ensuremath{{\mu}}\xspace} 
\newcommand{\SB}{\ensuremath{{S}}\xspace} 
\newcommand{\LB}{\ensuremath{{T}}\xspace} 
\newcommand{\SBC}{\ensuremath{{s}}\xspace} 
\newcommand{\OC}{\ensuremath{{A}}\xspace} 
\newcommand{\FC}{\ensuremath{{B}}\xspace} 
\newcommand{\OCS}{\ensuremath{M}\xspace} 
\newcommand{\OCM}{\ensuremath{{\alpha}}\xspace} 
\newcommand{\FCM}{\ensuremath{{\beta}}\xspace} 
\newcommand{\OCMU}{\ensuremath{{\hat{\alpha}}}\xspace} 
\newcommand{\OCMC}{\ensuremath{a}\xspace} 
\newcommand{\FCMC}{\ensuremath{b}\xspace} 
\newcommand{\FCP}{\ensuremath{{x}}\xspace} 
\newcommand{\FCPP}{\ensuremath{{x'}}\xspace}
\newcommand{\FCPH}{\ensuremath{{\hat{x}}}\xspace}
\newcommand{\CS}{\ensuremath{{C}}\xspace} 
\newcommand{\OCC}{\ensuremath{\CS_{\OC}}\xspace} 
\newcommand{\SCALE}{\ensuremath{s}\xspace} 
\newcommand{\APX}{\ensuremath{c}\xspace} 
\newcommand{\ScFLB}{\ensuremath{{\ell}}\xspace} 
\newcommand{\ScFUB}{\ensuremath{{u}}\xspace}
\newcommand{\ScFSm}{\ensuremath{{\gamma}}\xspace} 
\newcommand{\ScFSmC}{\ensuremath{{\Gamma}}\xspace} 
\newcommand{\ScFSmCp}{\ensuremath{{\Gamma'}}\xspace}
\begin{document}
\def\ourtitle{On the Distortion of Voting \\ with Multiple Representative Candidates}
\title{\ourtitle}
\author{Yu Cheng \\ Duke University \and
Shaddin Dughmi \\ University of Southern California \and
David Kempe \\ University of Southern California}
\date{}

\maketitle

\begin{abstract}
We study positional voting rules when candidates and voters
are embedded in a common metric space,
and cardinal preferences are naturally given by distances in the metric space. 
In a positional voting rule, each candidate receives a score from each
ballot based on the ballot's rank order;
the candidate with the highest total score wins the election. 
The \emph{cost} of a candidate is his sum of distances to all voters, and
the distortion of an election is the ratio between the cost of the
elected candidate and the cost of the optimum candidate.
We consider the case when candidates are \emph{representative} of the population,
in the sense that they are drawn i.i.d.~from the population of the voters,
and analyze the expected distortion of positional voting rules.
  
Our main result is a clean and tight characterization of
positional voting rules that have constant expected distortion
(independent of the number of candidates and the metric space).
Our characterization result immediately implies constant expected distortion for 
Borda Count and elections in which each voter approves a constant \emph{fraction} of all candidates.
On the other hand, we obtain super-constant expected distortion for
Plurality, Veto, and approving a constant \emph{number} of candidates.
These results contrast with previous results on voting with metric
preferences: 
When the candidates are chosen adversarially,
all of the preceding voting rules have distortion linear in
the number of candidates or voters.
Thus, the model of representative candidates allows us to
distinguish voting rules which seem equally bad in the worst case.

\end{abstract}

\section{Introduction} \label{sec:introduction}

In light of the classic impossibility results for axiomatic approaches
to social choice \cite{arrow:social-choice} and voting
\cite{gibbard:manipulation,satterthwaite:voting},
a fruitful approach has been to treat voting as an implicit
optimization problem of finding the ``best'' candidate for the
population in aggregate \cite{BCHLPS:utilitarian:distortion,caragiannis:procaccia:voting,procaccia:approximation:gibbard,procaccia:rosenschein:distortion}.
Using this approach, voting systems can be compared based on how much
they \emph{distort} the outcome,
in the sense of leading to the election of suboptimal candidates.
A particularly natural optimization objective is the \emph{sum of distances}
between voters and the chosen candidate in a suitable metric space
\cite{anshelevich:ordinal,anshelevich:bhardwaj:postl,anshelevich:postl:randomized,goel:krishnaswamy:munagala}. 
The underlying assumption is that the closer a candidate is to a voter,
the more similar their positions on key questions are.
Because proximity implies that the voter would benefit from the
candidate's election,
voters will rank candidates by increasing distance, 
a model known as \emph{single-peaked preferences}
\cite{black:rationale,downs:democracy,black:committees-elections,moulin:single-peak,merrill:grofman,barbera:gul:stacchetti,richards:richards:mckay,barbera:social-choice}.

Even in the absence of strategic voting,
voting systems can lead to high distortion in this setting,
because they typically allow only for communication of \emph{ordinal} preferences\footnote{Of course, it is also highly
  questionable that voters would be able to quantify distances in a
  metric space sufficiently accurately, in particular given that the
  metric space is primarily a modeling tool rather than an actual
  concrete object.}, 
i.e., rankings of candidates \cite{boutilier:rosenschein:incomplete}.
In a beautiful piece of recent work,
Anshelevich et al.~\cite{anshelevich:bhardwaj:postl} showed that this approach
can draw very clear distinctions between voting systems:
some voting systems (in particular, Copeland and related systems)
have distortion bounded by a small constant,
while most others (including Plurality, Veto, $k$-approval, and Borda Count)
have unbounded distortion,
growing linearly in the number of voters or candidates.

The examples giving bad distortion typically have the property
that the candidates are not ``representative'' of the voters.
Anshelevich et al.~\cite{anshelevich:bhardwaj:postl} show more positive
results when there are no near-ties for first place in any voter's
ranking.
Cheng et al.~\cite{CDK17:of-the-people} propose instead a
model of representativeness in which the candidates are drawn randomly
from the population of voters;
under this model, 
they show smaller constant
distortion bounds than 
the worst-case bounds for majority voting with $\NUMCAND = 2$
candidates.
Cheng et al.~\cite{CDK17:of-the-people} left as an open question the analysis of
the distortion of voting systems for $\NUMCAND \geq 3$ representative
candidates.

In the present work, we study the distortion of \emph{positional voting systems} with $\NUMCAND \geq 3$ representative candidates.
Informally (formal definitions of all concepts are given in
Section~\ref{sec:preliminaries}),
a positional voting system is one in which each voter writes down an
ordering of candidates, and the system assigns a score to each candidate based solely on his\footnote{For
  consistency, we always use male pronouns for candidates and female
  pronouns for voters. } position in the voter's ordering. The map from positions to scores is known as the \emph{scoring rule} of the voting system, and for $\NUMCAND$ candidates is a function $\SCF[\NUMCAND]:\set{0,\ldots,\NUMCAND-1} \to \RR_{\ge 0}$. 
The total score of a candidate is the sum of scores he
obtains from all voters,
and the winner is the candidate with maximum total score.
The most well-known 
explicitly positional voting
system is Borda Count~\cite{borda:elections},
in which $\ScF[\NUMCAND]{i} = n-i$ for all $i$.
Many other systems
are naturally cast in this framework, including
Plurality (in which voters give 1 point to their first choice only)
and Veto (in which voters give 1 point to all but their last choice).

In analyzing positional voting systems, we assume that voters are not strategic,
i.e., they report their true ranking of candidates based on proximity in the metric space.
This is in keeping with the line of work on analyzing the distortion
of social choice functions,
and avoids issues of game-theoretic modeling and equilibrium
existence or selection (see, e.g., \cite{feldman:fiat:golomb})
which are not our focus.


As our main contribution, we characterize when a positional voting system is guaranteed to have constant
distortion, regardless of the underlying metric space of voters and
candidates, and regardless of the number \NUMCAND of candidates that
are drawn from the voter distribution.
The characterization relies almost entirely on the
``limit voting system.''
By normalizing both the scores and the candidate index to lie in
$[0,1]$ (we associate the \Kth{i} out of \NUMCAND
candidates with his quantile $\frac{i}{\NUMCAND-1} \in [0,1]$),
we can take a suitable limit \SCF of the scoring functions
\SCF[\NUMCAND] as $\NUMCAND \to \infty$.

Our main result (Corollary~\ref{cor:generalized-borda-constant} in
Section~\ref{sec:generalized-borda}) states the following:
(1) If \SCF is not constant on the open interval $(0,1)$,
then the voting system has constant distortion.
(2) If \SCF is a constant other than 1 on the open interval $(0,1)$,
then the voting system does not have constant distortion.
The only remaining case is when $\SCF \equiv 1$ on $(0,1)$.
In that case, the rate of convergence of \SCF[\NUMCAND] to \SCF
matters, and a precise characterization is given by
Theorem~\ref{thm:main-generalized-borda}.

As direct applications of our main result, we obtain that
Borda Count and $k$-approval for $k=\Theta(\NUMCAND)$ representative
candidates have constant distortion;
on the other hand, Plurality, Veto, the Nauru Dowdall method
(see Section~\ref{sec:preliminaries}),
and $k$-approval for $k=O(1)$ have super-constant distortion.
In fact, it is easy to adapt the proof of
Theorem~\ref{thm:main-generalized-borda}
to show that the distortion of Plurality, Veto, and $O(1)$-approval,
even with representative candidates, is $\Omega(\NUMCAND)$.

Our results provide interesting contrasts to the results of 
Anshelevich et al.~\cite{anshelevich:bhardwaj:postl}.
Under adversarial candidates, all of the above-mentioned voting rules
have distortion $\Omega(\NUMCAND)$;
the focus on representative candidates allowed us to distinguish the
performance of Borda Count and $\Theta(\NUMCAND)$-approval from that
of the other voting systems.
Thus, an analysis in terms of representative candidates allows us to
draw distinctions between voting systems which in a worst-case setting
seem to be equally bad.

As a by-product of the proof of our main theorem,
in Lemma~\ref{lem:linear-distortion},
we show that \emph{every} voting system (positional or otherwise) has distortion $O(\NUMCAND)$
with representative candidates.
Combined with the lower bound alluded to above,
this exactly pins down the distortion of Plurality, Veto, and
$O(1)$-approval with representative candidates to $\Theta(\NUMCAND)$.
For Veto, this result also contrasts with the worst-case bound of
Anshelevich et al.~\cite{anshelevich:bhardwaj:postl}, which showed that the distortion
can grow unboundedly even for $\NUMCAND=3$ candidates.


 \section{Preliminaries}
\label{sec:preliminaries}

\subsection{Voters, Metric Space, and Preferences}
The voters/candidates are embedded in a closed metric space $(\AV, \DIST)$,
where \Dist{\VP}{\VPP} is the distance between points $\VP, \VPP \in \AV$.
The distance captures the dissimilarity in opinions between voters
(and candidates) --- the closer two voters or candidates are,
the more similar they are.
The distribution of voters in \AV is denoted by 
the (measurable) density function \VProb{\VP}.
We allow for \VPROB to have point masses.\footnote{Since the continuum model allows for point masses,
it subsumes finite sets of voters.
Changing all our results to finite or countable voter
sets is merely cosmetic.}
Unless there is no risk of confusion, we will be careful to
distinguish between a location $\VP \in \AV$ and a specific
voter $j$ or candidate $i$ at that location.
We apply \DIST equally to locations/voters/candidates.

We frequently use the standard notion of a \emph{ball} 
$\Ball{\VP}{r} \eqdef \Set{\VPP}{\Dist{\VP}{\VPP} \leq r}$
in a metric space.
For balls (and other sets) $B$, we write
$\VProb{B} \eqdef \int_{\VP \in B} \VProb{\VP} \dd\VP$.

An election is run between $\NUMCAND \geq 2$ candidates
according to rules defined in Section~\ref{sec:prelim:scoring-rules}.
The \NUMCAND candidates are assumed to be representative of the
population, in the sense that their locations are drawn i.i.d.~from
the distribution \VPROB of voters.
 
Each voter ranks the \NUMCAND candidates $i$ by non-decreasing
distance from herself in $(\AV, \DIST)$.
Ties are broken arbitrarily, but consistently%
\footnote{Our results do not depend on specific tie breaking rules.},
meaning that all voters at the same location have the same ranking.
We denote the ranking of a voter $j$ or a location \VP over candidates
$i$ by \Pos[j]{i} or \Pos[\VP]{i}.
The distance-based ranking assumption means that
$\Pos[\VP]{i} < \Pos[\VP]{i'}$ implies that
$\Dist{\VP}{i} \leq \Dist{\VP}{i'}$ and
$\Dist{\VP}{i} < \Dist{\VP}{i'}$ implies that
$\Pos[\VP]{i} < \Pos[\VP]{i'}$.
As mentioned in the introduction, we assume that voters are not strategic;
i.e., they express their true ranking of candidates based on proximity
in the metric space.


\subsection{Social Cost and Distortion}
Candidates are ``better'' if they are closer to voters on average.
The \emph{social cost} of a candidate (or location) $i$ is
\begin{align*}
\Cost{i} & = \int_{\VP} \Dist{\VP}{i} \VProb{\VP} \dd\VP.
\end{align*}

The socially optimal candidate among the set \CS of candidates running
is denoted by $\Opt{\CS} \eqdef \argmin_{i \in \CS} \Cost{i}$.
The overall optimal location is denoted by
$\OPT \in \argmin_{\VP \in \AV} \Cost{\VP}$,
which is any 1-median of the metric space.
(If there are multiple optimal locations, consider one of them fixed
arbitrarily.)
The $\argmin$ always exists,
because the metric space is assumed to be closed,
and the cost function is continuous and bounded below by 0.
Note that it is not necessary that there be any voters located at
\OPT.

Based on the votes, a voting system will determine a \emph{winner}
\Winner{\CS} for the set \CS of candidates,
who will often be different from \Opt{\CS}.
The \emph{distortion} measures how much worse the winner is
than the optimum
\begin{align*}
\Dstr{\CS} & = \frac{\Cost{\Winner{\CS}}}{\Cost{\Opt{\CS}}}.
\end{align*}  
We are interested in the \emph{expected distortion} of positional
voting systems under i.i.d.~random candidates, i.e.,
\begin{align*}
\Expect[\CS \stackrel{\text{ i.i.d. }}{\sim} \VPROB]{\Dstr{\CS}}.
\end{align*}

Our distortion bounds are achieved by lower-bounding
$\Cost{\Opt{\CS}} \geq \Cost{\OPT}$.
A particularly useful quantity in this context is the fraction of
voters outside a ball of radius $r$ around \OPT, which we denote by
$\Vol{r} \eqdef 1 - \VProb{\Ball{\OPT}{r}}$.
The following lemma captures some useful simple facts that we use:

\begin{lemma} \label{lem:basic-facts}
\begin{enumerate}
  \item For any candidate or location $i$,
    \begin{align}
      \Cost{i} & \leq \Cost{\OPT} + \Dist{i}{\OPT}.
                   \label{eqn:triangle-bound}
    \end{align}
  \item The cost of any candidate or location $i$ can be written as
    \begin{align}
      \Cost{i} & = \int_0^{\infty} (1 - \VProb{\Ball{i}{r}}) \dd r.
                 \label{eqn:expect-rewrite}
    \end{align}
    
  \item For all $r \geq 0$, the cost of the optimum location \OPT is
                   lower-bounded by
    \begin{align}
      \Cost{\OPT} & \geq r \Vol{r}.
                 \label{eqn:opt-lower-bound}
    \end{align}
\end{enumerate}
\end{lemma}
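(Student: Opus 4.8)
The plan is to prove the three parts independently, each by a one-line computation.

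For \eqref{eqn:triangle-bound}, I would apply the triangle inequality pointwise: for every location $\VP$ we have $\Dist{\VP}{i} \le \Dist{\VP}{\OPT} + \Dist{\OPT}{i}$. Multiplying by the density $\VProb{\VP}$, integrating over \AV, and using that $\VProb{\cdot}$ is a probability density (so $\int_{\VP} \VProb{\VP}\,\dd\VP = 1$) gives $\Cost{i} = \int_{\VP} \Dist{\VP}{i}\,\VProb{\VP}\,\dd\VP \le \int_{\VP}\bigl(\Dist{\VP}{\OPT} + \Dist{\OPT}{i}\bigr)\VProb{\VP}\,\dd\VP = \Cost{\OPT} + \Dist{i}{\OPT}$.

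For \eqref{eqn:expect-rewrite}, the plan is the standard layer-cake (tail-integral) identity. Writing $\Dist{\VP}{i} = \int_0^{\Dist{\VP}{i}}\dd r$ and swapping the order of integration in $\int_{\VP}\int_0^{\Dist{\VP}{i}}\dd r\,\VProb{\VP}\,\dd\VP$ --- valid by Tonelli's theorem since the integrand is nonnegative --- reduces the claim to the observation that, for each fixed $r \ge 0$, the set $\Set{\VP}{\Dist{\VP}{i} > r}$ is precisely the complement of the closed ball $\Ball{i}{r}$, so it has \VPROB-mass $1 - \VProb{\Ball{i}{r}}$. The presence of point masses in \VPROB causes no difficulty, since every integral here is an integral against the underlying measure. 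For \eqref{eqn:opt-lower-bound}, I would restrict the integral defining $\Cost{\OPT}$ to the region outside $\Ball{\OPT}{r}$, on which $\Dist{\VP}{\OPT} > r$:
\[
\Cost{\OPT} = \int_{\VP} \Dist{\VP}{\OPT}\,\VProb{\VP}\,\dd\VP \;\ge\; \int_{\VP \notin \Ball{\OPT}{r}} \Dist{\VP}{\OPT}\,\VProb{\VP}\,\dd\VP \;\ge\; r\cdot\VProb{\AV\sm\Ball{\OPT}{r}} \;=\; r\,\Vol{r}.
\]
Equivalently, one can specialize \eqref{eqn:expect-rewrite} to $i = \OPT$ to get $\Cost{\OPT} = \int_0^\infty \Vol{t}\,\dd t$ and then use that $\Vol{\cdot}$ is non-increasing to bound $\int_0^r \Vol{t}\,\dd t \ge r\,\Vol{r}$.

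None of the three parts presents a genuine obstacle; the only points deserving a moment's care are the interchange of integrals in part (2), which is a routine application of Tonelli, and keeping the open/closed-ball bookkeeping in part (2) consistent with the paper's definition $\Ball{\VP}{r} = \Set{\VPP}{\Dist{\VP}{\VPP} \le r}$ (so its complement is exactly $\SET{\Dist{\cdot}{\VP} > r}$).
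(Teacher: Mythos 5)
Your proposal is correct and follows essentially the same route as the paper: part (1) is the identical pointwise triangle-inequality argument, part (2) is the standard tail-integral identity (which the paper simply cites for nonnegative random variables, while you derive it via Tonelli), and part (3) is an equivalent one-line lower bound (restricting the cost integral to the complement of $\Ball{\OPT}{r}$, versus the paper's splitting of the tail integral at $r$ --- your own noted alternative). No gaps; your care with the open/closed-ball bookkeeping is if anything slightly more precise than the paper's.
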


\begin{emptyproof}
\begin{enumerate}
\item The proof of the first inequality simply applies the triangle
  inequality under the integral:
  \[
    \Cost{i} \; = \; \int_{\VP} \Dist{\VP}{i} \VProb{\VP} \dd\VP
    \; \leq \; \int_{\VP} (\Dist{\VP}{\OPT} + \Dist{i}{\OPT}) \VProb{\VP} \dd\VP
    \; = \; \Cost{\OPT} + \Dist{i}{\OPT}.
  \]
\item For the second equation, observe that
  $\Cost{i} = \Expect[\VP \sim \VPROB]{\Dist{i}{\VP}}$,
  and the expectation of any non-negative random variable $X$ can be
  rewritten as 
  $\Expect{X} = \int_0^{\infty} \Prob{X \geq x} \dd x$.
\item For the third inequality, we apply the previous part with $i=\OPT$,
  and lower bound
  \begin{align*}
    \int_0^{\infty} \Prob[\VP \sim \VPROB]{\Dist{\OPT}{\VP} \geq x} \dd x
    & \; = \; \int_0^r \Prob[\VP \sim \VPROB]{\Dist{\OPT}{\VP} \geq x} \dd x
          + \int_r^{\infty} \Prob[\VP \sim \VPROB]{\Dist{\OPT}{\VP} \geq x} \dd x
 \\ & \; \geq \; \int_0^r \Prob[\VP \sim \VPROB]{\Dist{\OPT}{\VP} \geq r} \dd x
         + \int_r^{\infty} 0 \; \dd x
 \\ & \; = \; r \cdot \Vol{r}. \QED
  \end{align*}
\end{enumerate}
\end{emptyproof}

\subsection{Positional Voting Systems and Scoring Rules}
\label{sec:prelim:scoring-rules}

We are interested in \emph{positional voting systems}.
Such systems are based on \emph{scoring rules}:
voters give a ranking of candidates,
and with each position is associated a score.

\begin{definition}[Scoring Rule] \label{def:scoring-rule}
A scoring rule for \NUMCAND candidates is a non-increasing function
$\SCF[\NUMCAND] : \SET{0, \ldots, \NUMCAND-1} \to [0,1]$ with
$\ScF[\NUMCAND]{0} = 1$ and $\ScF[\NUMCAND]{\NUMCAND-1} = 0$.
\end{definition}

\begin{definition}[Positional Voting System] \label{def:voting-system}
A positional voting system is a sequence of scoring rules \SCF[\NUMCAND], one for each number of candidates $\NUMCAND = 1, 2, \ldots$.
\end{definition}

The interpretation of \SCF[\NUMCAND] is that if voter $j$ puts a candidate $i$ in position \Pos[j]{i} on her ballot, then $i$ obtains \ScF[\NUMCAND]{\Pos[j]{i}} points from $j$. The total score of candidate $i$ is then
\begin{align*}
\Score{i} & = \int_{\VP} \ScF[\NUMCAND]{\Pos[\VP]{i}} \VProb{\VP} \dd\VP.
\end{align*}
The winning candidate is one with highest total score, i.e.,
for a set \CS of \NUMCAND candidates,
$\Winner{\CS} \in \argmax_{i \in \CS} \Score{i}$;
again, ties are broken arbitrarily, and our results do not depend on
tie breaking.

The restriction to monotone non-increasing scoring rules is standard
when studying positional voting systems.
One justification is that in any positional voting system violating
this restriction, truth-telling is a dominated strategy,
rendering such a system uninteresting for most practical purposes.
Given this restriction, the assumption that $\ScF[\NUMCAND]{0} = 1$
and $\ScF[\NUMCAND]{\NUMCAND-1} = 0$ is without loss of generality,
because a score-based rule is invariant under affine transformations.

Next, we want to capture the notion that a positional voting system
is ``consistent'' as we vary the number of candidates \NUMCAND.
Intuitively, we want to exclude contrived voting systems such as
``If the number of candidates is even, then use Borda Count;
otherwise use Plurality voting.''
This is captured by the following definition.

\begin{definition}[Consistency] \label{def:consistency}
Let \VS be a positional voting system with scoring rules
$\Set{\SCF[\NUMCAND]}{\NUMCAND \in \NN}$.
We say that \VS is \emph{consistent} if there exists a function 
$\SCF:\QQ \intersect [0,1] \to [0,1]$ such that for each rational quantile $x \in [0,1]$ and accuracy
parameter $\epsilon >0$,
there exists a threshold \NUMCANDZ such that
$\ScF[\NUMCAND]{\floor{x(\NUMCAND-1)}} \geq \ScF{x} - \epsilon$ and $\ScF[\NUMCAND]{\ceil{x(\NUMCAND-1)}} \leq \ScF{x} +\epsilon$
for all 
$\NUMCAND \geq \NUMCANDZ$.  We call \SCF the \emph{limit scoring rule of \VS}.
\end{definition}

Intuitively, this definition says that the sequence of scoring rules $\SCF[\NUMCAND]$ is consistent with a single scoring rule $\SCF$ in the limit. Using the fact that $\SCF[\NUMCAND]$ is monotone non-increasing for each $\NUMCAND$, it can be shown that $\SCF$ is also monotone non-increasing.  We note that $\SCF[\NUMCAND]$ converges pointwise to $\SCF$ in a precise and natural sense. Formally, when $x \in [0,1]$ is rational, there exists an infinite sequence of integers $\NUMCAND$ with $\floor{x(\NUMCAND-1)}=\ceil{x(\NUMCAND-1)}=x(\NUMCAND-1)$, and consistency implies that $\ScF{x}$ must equal the limit of $\ScF[\NUMCAND]{x(\NUMCAND-1)}$ for that sequence of values of $\NUMCAND$. Therefore the limit scoring rule $\SCF$ is uniquely defined if it exists.

All positional voting systems we are aware of are consistent according to Definition~\ref{def:consistency}.

\begin{example} \label{exm:common-positional-systems}
To illustrate the notion of a consistent positional voting system,
consider the following examples, encompassing most well-known scoring rules.

\begin{itemize}
\item In Plurality voting with \NUMCAND candidates,
  $\ScF[\NUMCAND]{0} = 1$ and $\ScF[\NUMCAND]{k} = 0$ for all $k > 0$.
  The limit scoring rule is $\ScF{0} = 1$ and $\ScF{x} = 0$ for all
  $x > 0$.

\item In Veto voting with \NUMCAND candidates,
  $\ScF[\NUMCAND]{k} = 1$ for all $k < \NUMCAND-1$
  and $\ScF[\NUMCAND]{\NUMCAND-1} = 0$. 
  The limit scoring rule is $\ScF{x} = 1$ for all $x < 1$
  and $\ScF{1} = 0$.

\item In $k$-approval voting with constant $k$,
  we have $\ScF[\NUMCAND]{k'} = 1$ for $k' \leq \min(k-1, \NUMCAND-1)$,
  and $\ScF[\NUMCAND]{k'} = 0$ for all other $k'$.
  The limit scoring rule is $\ScF{0} = 1$ and $\ScF{x} = 0$ for all
  $x > 0$, i.e., the same as for Plurality voting.
  (This relies on $k$ being constant, or more generally, $k = o(\NUMCAND)$.)

\item In $k$-approval voting with linear $k$,
  there exists a constant $\gamma \in (0,1)$ with
  $\ScF[\NUMCAND]{k} = 1$ for all $k \leq \gamma \NUMCAND$,
  and $\ScF[\NUMCAND]{k} = 0$ for all larger $k$.
  The limit scoring rule is $\ScF{x} = 1$ for $x \leq \gamma$ and
  $\ScF{x} = 0$ for $x > \gamma$.

\item The Borda voting rule has
  $\ScF[\NUMCAND]{k} = 1 - \frac{k}{\NUMCAND-1}$ (after normalization).
  The limit scoring rule is $\ScF{x} = 1-x$.

\item The Dowdall method used in Nauru
  \cite{fraenkel:grofman:borda-count,reilly:south-seas}
  has $\ScF[\NUMCAND]{k} = 1/(k+1)$.
  After normalization, the rule becomes 
  $\ScF[\NUMCAND]{k} = \frac{1}{\NUMCAND-1} \cdot
  (\frac{\NUMCAND}{k+1} - 1)$.
  The limit scoring rule is $\ScF{0} = 1$ and $\ScF{x} = 0$
  for all $x > 0$, i.e., the same as for Plurality voting.
  This is because for every constant quantile $x$,
  the score of the candidate at $x$ is
  $\frac{1}{\NUMCAND - 1} \left(\frac{1}{x} - 1\right)
  \stackrel{\NUMCAND \to \infty}{\to} 0$.
\end{itemize}
\end{example}

\section{The Main Characterization Result}
\label{sec:generalized-borda}

In this section, we state and prove our main theorem,
characterizing positional voting systems with constant distortion.

\begin{theorem} \label{thm:main-generalized-borda}
Let \VS be a positional voting system with a sequence \SCF[\NUMCAND]
of scoring rules for $\NUMCAND = 1, 2, \ldots$.
Then, \VS has constant expected distortion if and only if
there exist constants \NUMCANDZ and $y \in (0, 1)$ such that
for all $\NUMCAND \geq \NUMCANDZ$,
 \begin{align}
 y \cdot \sum_{k=0}^{\Ceiling{y (\NUMCAND - 1)}-1} \left( \ScF[\NUMCAND]{k} - \ScF[\NUMCAND]{\Ceiling{y (\NUMCAND - 1)}} \right)
 & \; > \; (1-y) \cdot \sum_{k=\NUMCAND - \Ceiling{y (\NUMCAND - 1)}}^{\NUMCAND-1} \left( 1 - \ScF[\NUMCAND]{k} \right).
 \label{eqn:limit-condition}
 \end{align}
\end{theorem}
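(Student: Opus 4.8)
The plan is to prove both implications of Theorem~\ref{thm:main-generalized-borda} by comparing an arbitrary instance, up to constant factors, against a single \emph{canonical configuration}: a $(1-\epsilon)$-fraction of the voter mass placed in a tiny neighborhood of $\OPT$ and \emph{genuinely spread out} there (say uniformly on a small circle), together with an $\epsilon$-fraction concentrated at one far point $B$ with $\Dist{B}{\OPT}=D\gg\Cost{\OPT}$. With $\NUMCAND$ i.i.d.\ representative candidates, a Chernoff bound puts $(1\pm o(1))(1-\epsilon)\NUMCAND$ of them near $\OPT$ and $(1\pm o(1))\epsilon\NUMCAND$ at $B$; because the near-$\OPT$ voters are spread out, a concentration estimate for the induced rank statistics shows that no single near-$\OPT$ candidate beats the \emph{per-candidate average} of the near-$\OPT$ candidates' scores by more than $o(1)$, whereas consistent tie-breaking lets one candidate at $B$ collect essentially $\epsilon+(1-\epsilon)\,\ScF[\NUMCAND]{\Ceiling{y(\NUMCAND-1)}}$ with $y\eqdef 1-\epsilon$. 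Writing everything in terms of positions, the average near-$\OPT$ score equals $\tfrac{1}{\NUMCAND}\sum_{k<\Ceiling{y(\NUMCAND-1)}}\ScF[\NUMCAND]{k}$ up to lower-order terms, and the deficit of the far candidate relative to that average is controlled by $\sum_{k\ge\NUMCAND-\Ceiling{y(\NUMCAND-1)}}\bigl(1-\ScF[\NUMCAND]{k}\bigr)$; balancing these two expressions, after moving the reference value $\ScF[\NUMCAND]{\Ceiling{y(\NUMCAND-1)}}$ across, is \emph{exactly} inequality~(\ref{eqn:limit-condition}). Discovering this identity --- in particular why the thresholds are the two sums appearing there, with weights $y$ and $1-y$ and with those ceilings --- is the conceptual heart of the proof.

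For the direction ``(\ref{eqn:limit-condition}) $\Rightarrow$ constant distortion'', fix a $y$ witnessing the inequality for all $\NUMCAND\ge\NUMCANDZ$, set $y'\eqdef\tfrac{1+y}{2}\in(y,1)$ and $\epsilon\eqdef 1-y'$, and pick the radius $r_0$ with $\VProb{\Ball{\OPT}{r_0}}=1-\epsilon$. By~(\ref{eqn:opt-lower-bound}) we get $r_0\le\Cost{\OPT}/\epsilon$, so by~(\ref{eqn:triangle-bound}) every candidate in $\Ball{\OPT}{r_0}$ has cost $O(\Cost{\OPT})$. Condition on the $1-e^{-\Omega(\NUMCAND)}$ event that $|\Ball{\OPT}{r_0}\cap\CS|$ and $|\Ball{\OPT}{3r_0}\cap\CS|$ are within $o(\NUMCAND)$ of their expectations. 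If the winner $w$ had $\Dist{w}{\OPT}>3r_0$, then for every voter $\VP\in\Ball{\OPT}{r_0}$ all of the $\ge y'\NUMCAND-o(\NUMCAND)>y\NUMCAND$ candidates in $\Ball{\OPT}{r_0}$ are ranked ahead of $w$, so $\Score{w}\le(1-\epsilon)\,\ScF[\NUMCAND]{\Ceiling{y(\NUMCAND-1)}}+\epsilon$; moreover every candidate that such a $\VP$ ranks ahead of a fixed $c\in\Ball{\OPT}{r_0}$ lies in $\Ball{\OPT}{3r_0}$, so the near-$\OPT$ candidates occupy a prescribed number of the top positions, and monotonicity of $\ScF[\NUMCAND]{\cdot}$ together with averaging over these candidates yields some $c$ with $\Score{c}\ge\tfrac{1}{\NUMCAND}\sum_{k<\Ceiling{y(\NUMCAND-1)}}\ScF[\NUMCAND]{k}-o(1)$. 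Inequality~(\ref{eqn:limit-condition}) is precisely what forces $\Score{c}>\Score{w}$, contradicting $w$ being the winner; hence $\Dist{w}{\OPT}\le 3r_0=O(\Cost{\OPT})$, i.e.\ distortion $O(1/\epsilon)=O(1)$ on the good event. Combining with the universal $O(\NUMCAND)$ bound (Lemma~\ref{lem:linear-distortion}) on the complementary $e^{-\Omega(\NUMCAND)}$ event gives $\Expect[\CS\sim\VPROB]{\Dstr{\CS}}=O(1)$.

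For the converse I argue the contrapositive: assume~(\ref{eqn:limit-condition}) fails for \emph{every} $y\in(0,1)$. Fix any sequence $\epsilon_j\downarrow 0$ and put $y_j\eqdef 1-\epsilon_j$; by hypothesis there are arbitrarily large $\NUMCAND$ at which $y_j\sum_{k<\Ceiling{y_j(\NUMCAND-1)}}\bigl(\ScF[\NUMCAND]{k}-\ScF[\NUMCAND]{\Ceiling{y_j(\NUMCAND-1)}}\bigr)\le(1-y_j)\sum_{k\ge\NUMCAND-\Ceiling{y_j(\NUMCAND-1)}}\bigl(1-\ScF[\NUMCAND]{k}\bigr)$. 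For each such $\NUMCAND$ I instantiate the canonical configuration with $\epsilon=\epsilon_j$; the concentration facts above show that, with probability bounded away from $0$, every near-$\OPT$ candidate scores below $\tfrac{1}{\NUMCAND}\sum_{k<\Ceiling{y_j(\NUMCAND-1)}}\ScF[\NUMCAND]{k}+o(1)$ while some candidate at $B$ scores at least $\epsilon_j+(1-\epsilon_j)\ScF[\NUMCAND]{\Ceiling{y_j(\NUMCAND-1)}}-o(1)$, and the failure of~(\ref{eqn:limit-condition}) at $y_j$ is exactly what makes the latter exceed the former. Hence the winner is a candidate at $B$, and (bounding $\Cost{\Opt{\CS}}\ge\Cost{\OPT}$) the distortion is $\Theta(1/\epsilon_j)$ on this event; letting $j\to\infty$ shows $\sup_\NUMCAND\Expect[\CS\sim\VPROB]{\Dstr{\CS}}=\infty$. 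Specializing the construction to Plurality, Veto, and $O(1)$-approval (with the near-$\OPT$ mass arranged so that each such candidate collects only a $\Theta(1/\NUMCAND)$-share, up to lower-order factors) yields the claimed $\Omega(\NUMCAND)$ lower bounds.

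The main obstacle is the concentration step and its exact bookkeeping. One must show that spreading the near-$\OPT$ mass truly prevents any single near-$\OPT$ candidate from beating the per-candidate average by more than $o(1)$ --- for Plurality-type rules this is the statement that the largest Voronoi cell among $\Theta(\NUMCAND)$ i.i.d.\ points is $o(1)$, and for general scoring rules a more delicate statement about the rank of a point among its neighbours --- and one must control the errors arising from the $o(\NUMCAND)$ slack in the candidate counts, which is dangerous precisely when $\ScF[\NUMCAND]{\cdot}$ has a sharp drop near position $\Ceiling{y(\NUMCAND-1)}$ (this is why $y'$ is taken strictly between $y$ and $1$ rather than equal to $1-\epsilon$, and why the winner's neighbourhood is enlarged by a constant factor to $\Ball{\OPT}{3r_0}$). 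Carrying out all of these estimates so that the separating threshold is \emph{exactly} inequality~(\ref{eqn:limit-condition}), rather than the same inequality with perturbed constants, is the delicate part of the argument.
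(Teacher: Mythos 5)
Your sufficiency argument follows the same skeleton as the paper's (balls of radius $r_0$ and $3r_0$, a Chernoff bound on the number of candidates outside the inner ball, and a comparison of the far candidate's maximum score with the \emph{average} score of the inner-ball candidates), but two steps have real gaps. First, the final combination is invalid as stated: Lemma~\ref{lem:linear-distortion} bounds the \emph{unconditional} expectation of the distortion, not the distortion conditioned on the bad event, and on an unbounded metric the winner's distance on the $e^{-\Omega(\NUMCAND)}$-probability bad event can be arbitrarily large relative to $\Cost{\OPT}$; a single fixed radius $r_0$ gives you no control of the tail $\int_{3r_0}^{\infty}\Prob{\Dist{\OPT}{\Winner{\CS}}\geq r}\,\dd r$. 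The paper avoids this by proving the low-probability and winner-containment lemmas for \emph{every} $r\geq \tr$, with a bound $O(\Vol{r})$ that integrates to $O(\Cost{\OPT})$ via \eqref{eqn:expect-rewrite}. Second, your claim that some inner-ball candidate scores at least $\frac{1}{\NUMCAND}\sum_{k<\Ceiling{y(\NUMCAND-1)}}\ScF[\NUMCAND]{k}-o(1)$ ignores that an inner-ball voter may rank candidates from the annulus $\Ball{\OPT}{3r_0}\setminus\Ball{\OPT}{r_0}$ \emph{above} the inner-ball candidates, shifting their positions by up to $\Theta(\epsilon \NUMCAND)$; enlarging to $3r_0$ only controls candidates outside the outer ball. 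Because of this shift (and because the realized number of inner candidates need not equal $\Ceiling{y(\NUMCAND-1)}$), condition \eqref{eqn:limit-condition} cannot be applied verbatim; one needs a strengthened, shift-robust version, which is exactly the content of the paper's Lemma~\ref{lem:shifted-comparison} and is not a routine bookkeeping step.

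The necessity direction is where your approach genuinely fails. Your canonical instance (mass $1-\epsilon$ spread on a small circle around $\OPT$, mass $\epsilon$ at a single far point $B$) does not produce super-constant distortion for veto-like rules, which are precisely the rules for which Theorem~\ref{thm:main-generalized-borda} is needed beyond Corollary~\ref{cor:generalized-borda-constant}. Under Veto, every voter near $\OPT$ vetoes a candidate at $B$ (all $B$-candidates are co-located, so this is decided by tie-breaking), and all far voters veto the single circle candidate farthest from $B$; consequently almost every circle candidate receives score exactly $1$, as do the non-vetoed $B$-candidates, and the winner is decided by tie-breaking among equal scores --- you cannot force the far candidate to win, and the same problem appears for $k$-veto and any rule whose failure of \eqref{eqn:limit-condition} is driven by the bottom-of-ballot sum. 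The difficulty is structural: with the far mass at one point, the far voters' low positions are concentrated on a few circle candidates instead of being spread over all of them, and co-located far candidates create uncontrolled ties. The paper's construction is designed around exactly these issues: the big cluster consists of $\NUMCAND^3$ discrete locations with i.i.d.\ random pairwise distances (so every candidate's position distribution is uniform and independent across locations, giving clean Hoeffding bounds at every rank), and the far cluster is an interval with distances $\min(\FCP,\FCPP)$ plus the permutation-indexed cross distances, which guarantees a unique far candidate that is ranked first by all far voters, ranked ahead of all other far candidates by all near voters, and strictly beats every near candidate. Your concentration claim for the circle ("no single near-$\OPT$ candidate beats the per-candidate average by more than $o(1)$") is both unproved for general monotone scoring rules and, as the Veto example shows, insufficient for the conclusion you need (no near candidate beating the far candidate).
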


We prove Theorem~\ref{thm:main-generalized-borda} in Sections~\ref{sec:sufficiency} (sufficiency) and \ref{sec:necessity} (necessity).
Condition \eqref{eqn:limit-condition} is quite unwieldy.
In most cases of practical interest, we can use
Corollary~\ref{cor:generalized-borda-constant}. 

\begin{corollary} \label{cor:generalized-borda-constant}
Let \VS be a consistent positional voting system with limit scoring rule \SCF.
\begin{enumerate}
\item If \SCF is not constant on the open interval $(0,1)$,
then \VS has constant expected distortion.
\item If \SCF is equal to a constant other than 1 on the open interval
  $(0,1)$, then \VS does not have constant expected distortion.
\end{enumerate}
\end{corollary}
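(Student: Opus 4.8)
The plan is to read both statements off Theorem~\ref{thm:main-generalized-borda} by determining the asymptotics of the two sides of condition~\eqref{eqn:limit-condition}. Fix a quantile $y\in(0,1)$ and put $K=\Ceiling{y(\NUMCAND-1)}$. Dividing~\eqref{eqn:limit-condition} through by $\NUMCAND-1$, its left- and right-hand sides become
\[
\frac{y}{\NUMCAND-1}\Bigl(\sum_{k=0}^{K-1}\ScF[\NUMCAND]{k}-K\,\ScF[\NUMCAND]{K}\Bigr)
\qquad\text{and}\qquad
\frac{1-y}{\NUMCAND-1}\sum_{k=\NUMCAND-K}^{\NUMCAND-1}\bigl(1-\ScF[\NUMCAND]{k}\bigr).
\]
I will show that, whenever $y$ is a point of continuity of \SCF, these converge as $\NUMCAND\to\infty$ to
\[
L(y):=y\int_0^y\bigl(\ScF{x}-\ScF{y}\bigr)\,\dd x
\qquad\text{and}\qquad
R(y):=(1-y)\int_{1-y}^1\bigl(1-\ScF{x}\bigr)\,\dd x ,
\]
respectively. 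The reason is that each $\SCF[\NUMCAND]$ is monotone non-increasing: splitting $[0,y]$ (and $[1-y,1]$) at finitely many rational continuity points $x_0<\dots<x_m$ of \SCF and invoking Definition~\ref{def:consistency} at each $x_i$, monotonicity pins $\ScF[\NUMCAND]{k}$ into $[\ScF{x_{i+1}}-\eps,\ \ScF{x_i}+\eps]$ for every $k$ with $x_i(\NUMCAND-1)\le k<x_{i+1}(\NUMCAND-1)$ once $\NUMCAND$ is large, so the block sums are trapped between Darboux sums of \SCF; refining the partition yields the displayed integrals. Since \SCF is monotone it has only countably many discontinuities, so continuity points $y$ (and partitions) are available wherever we need them. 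Given this convergence, \eqref{eqn:limit-condition} holds for all large $\NUMCAND$ as soon as $L(y)>R(y)$, and fails for all large $\NUMCAND$ as soon as $L(y)<R(y)$.

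\textbf{Part (1): \SCF is non-constant on $(0,1)$.} First I would pick $0<a<b<1$ with $\ScF{a}>\ScF{b}$. For any $y\in(b,1)$, monotonicity of \SCF gives $\ScF{x}-\ScF{y}\ge\ScF{a}-\ScF{b}>0$ for all $x\in(0,a)$ and $\ScF{x}-\ScF{y}\ge0$ for all $x\in(0,y)$, hence
\[
L(y)\ \ge\ y\,a\,\bigl(\ScF{a}-\ScF{b}\bigr)\ \ge\ b\,a\,\bigl(\ScF{a}-\ScF{b}\bigr)\ =:\ \kappa\ >\ 0,
\]
while $R(y)\le(1-y)\int_{1-y}^1 1\,\dd x\le 1-y$. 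Choosing $y\in(b,1)$ with $1-y<\kappa$ — and, nudging $y$ slightly if necessary, also a rational continuity point of \SCF — gives $L(y)>R(y)$, so~\eqref{eqn:limit-condition} holds for all sufficiently large $\NUMCAND$; by Theorem~\ref{thm:main-generalized-borda}, \VS has constant expected distortion.

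\textbf{Part (2): $\SCF\equiv c$ on $(0,1)$ with $c\ne1$.} Since $\ScF{0}=1\ge c$ and $\ScF{1}=0\le c$ we have $c\in[0,1)$. Every $y\in(0,1)$ is a continuity point of \SCF, and there $L(y)=y\int_0^y(c-c)\,\dd x=0$, whereas $R(y)=(1-y)\int_{1-y}^1(1-c)\,\dd x=y(1-y)(1-c)>0$. Thus $L(y)<R(y)$ for every $y\in(0,1)$, so for each $y$ condition~\eqref{eqn:limit-condition} fails for all large $\NUMCAND$; consequently no pair $(\NUMCANDZ,y)$ meeting the hypothesis of Theorem~\ref{thm:main-generalized-borda} exists, and \VS does not have constant expected distortion.

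\textbf{Main obstacle.} The one nonroutine ingredient is the convergence of the normalized sides of~\eqref{eqn:limit-condition} to $L(y)$ and $R(y)$. Definition~\ref{def:consistency} only controls $\ScF[\NUMCAND]{k}$ pointwise, and only at rational quantiles, which by itself says nothing about the sums $\sum_{k=0}^{K-1}\ScF[\NUMCAND]{k}$ that govern~\eqref{eqn:limit-condition}. It is the monotonicity of each $\SCF[\NUMCAND]$ that bridges the gap, letting us replace $\ScF[\NUMCAND]{\cdot}$ by the constant $\ScF{x_i}$ (up to $\eps$) on each block $[x_i(\NUMCAND-1),x_{i+1}(\NUMCAND-1))$; and one has to take a little care that the blocks abutting the endpoints $0,y,1-y,1$ are handled with the appropriate one-sided limits of \SCF, which is precisely why $y$ is taken to be a continuity point.
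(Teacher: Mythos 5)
Your proposal is correct in substance, and it reaches both parts of the corollary by a genuinely different organization than the paper. The paper never passes to the limiting integrals $L(y)$ and $R(y)$: for part (1) it fixes two rational quantiles $\ScFLB<\ScFUB$ with gap $\delta=\ScF{\ScFLB}-\ScF{\ScFUB}$, invokes Definition~\ref{def:consistency} at those two points only, lower-bounds the left side of \eqref{eqn:limit-condition} term by term by $\delta/2$ over the first $\Floor{\ScFLB(\NUMCAND-1)}$ indices, upper-bounds the right side crudely by $(1-y)$ times its number of terms, and then takes $y=\max(\ScFUB,\,1-\delta\ScFLB/8)$ --- the same ``push $y$ toward $1$'' move as your choice $1-y<\kappa$; for part (2) it again uses consistency at just two anchors, $\ScFSm/2$ and $1-\ScFSm/2$, bounds the $O(\ScFSm\NUMCAND)$ terms near the ends by $1$, and tunes $\ScFSm,\delta$ to flip the inequality. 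Your route instead proves one convergence lemma (the normalized sides of \eqref{eqn:limit-condition} tend to $L(y)$ and $R(y)$ at continuity points $y$) and then both parts collapse to trivial comparisons of integrals; this is cleaner and effectively identifies the asymptotic content of Theorem~\ref{thm:main-generalized-borda}'s condition, and it would also hand you quantitative information for other rules essentially for free. The price is that the lemma itself carries the technical burden: you must extend \SCF monotonically from the rationals to $[0,1]$, handle the $O(1)$ straddle indices at each anchor and the endpoint blocks, and justify $\ScF[\NUMCAND]{\Ceiling{y(\NUMCAND-1)}}\to\ScF{y}$ using anchors on both sides of $y$ --- which is exactly why you (correctly) restrict to continuity points. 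Note that the engine making your Darboux trapping work, namely monotonicity of each \SCF[\NUMCAND] bridging between finitely many rational anchors where consistency gives two-sided control, is precisely the paper's device for circumventing the lack of uniform convergence; the paper simply uses two anchors per part and bespoke finite-sum estimates rather than a full Riemann-sum limit. Both are valid; yours is more systematic but longer if written out in full, the paper's is more elementary and shorter.
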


Corollary~\ref{cor:generalized-borda-constant} is proved in Section~\ref{sec:generalized-borda-constant-proof}. 

The constant in Theorem~\ref{thm:main-generalized-borda} and
Corollary~\ref{cor:generalized-borda-constant} depends on \VS,
but not on the metric space or the number of candidates.
Corollary~\ref{cor:generalized-borda-constant} has the advantage of
determining constant expected distortion only based on the limit
scoring rule \SCF.
The only case when it does not apply is when $\ScF{x} = 1$
for all $x \in [0,1)$. 
In that case, the higher complexity of
Theorem~\ref{thm:main-generalized-borda} is indeed necessary to
determine whether \VS has constant distortion.
Fortunately, Veto voting is the only rule of practical importance for
which $\ScF{x} \equiv 1$ on $[0,1)$, and it is easily analyzed.


Before presenting the proofs, we apply the characterization to the
positional voting systems from
Example~\ref{exm:common-positional-systems}.
Using the limit scoring rules derived in
Example~\ref{exm:common-positional-systems},
Corollary~\ref{cor:generalized-borda-constant} implies constant
expected distortion for Borda Count and $k$-approval with linear $k =
\Theta(\NUMCAND)$,
and super-constant expected distortion for Plurality,
$k$-approval with $k=o(n)$, and the Dowdall method.
%

This leaves Veto voting, for which it is easy to apply
Theorem~\ref{thm:main-generalized-borda} directly.
Because $\ScF[\NUMCAND]{k} = 1$ for all $k < \NUMCAND-1$,
for any constant $y < 1$ and large enough \NUMCAND,
the left-hand side of \eqref{eqn:limit-condition} is 0,
while the right-hand side is positive.
Hence, \eqref{eqn:limit-condition} can never be satisfied for
sufficiently large \NUMCAND,
implying super-constant expected distortion. 
The proof easily generalizes to show that when voters
can veto $o(\NUMCAND)$ candidates, the distortion is super-constant.

\subsection{Sufficiency} \label{sec:sufficiency}

In this section, we prove that condition~\eqref{eqn:limit-condition}
suffices for constant distortion.
First, because of the monotonicity of \SCF[\NUMCAND],
if \eqref{eqn:limit-condition} holds for $y \in (0,1)$,
then it also holds for all $y' \in [y,1)$.
Now, the high-level idea of the proof is the following:
we define a radius \tr large enough so that the ball \Ball{\OPT}{\tr}
around the socially optimal location \OPT contains a very large
(but still constant) fraction $y$ of all voters,
such that $y$ satisfies \eqref{eqn:limit-condition}.
If the number of candidates \NUMCAND is large enough (a large constant),
standard Chernoff bounds ensure that as $r \geq \tr$ grows large,
most candidates who are running will be from inside \Ball{\OPT}{r}.
In turn, if many candidates inside \Ball{\OPT}{r} are running,
all candidates outside \Ball{\OPT}{3r} are very far down on almost
everyone's ballot, and therefore cannot win.
In particular, Inequality \eqref{eqn:limit-condition} implies that
the total score of an average candidate in \Ball{\OPT}{r}
exceeds the maximum possible total score of a candidate outside \Ball{\OPT}{3r}.
This allows us to bound the expected distortion in terms of the cost
of \OPT.

The case of small \NUMCAND is much easier, since we can treat \NUMCAND
as a constant.
In that case, the following lemma is sufficient.

\begin{lemma} \label{lem:linear-distortion}
  If \NUMCAND candidates are drawn i.i.d.~at random from \VPROB,
  the expected distortion is at most $\NUMCAND+1$.
\end{lemma}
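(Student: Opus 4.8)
The plan is to show that among the $\NUMCAND$ i.i.d.\ candidates, at least one is ``good enough,'' and that the positional winner cannot be much worse than an arbitrary running candidate. First I would fix the set $\CS$ of running candidates and bound the distortion $\Dstr{\CS} = \Cost{\Winner{\CS}}/\Cost{\Opt{\CS}}$ by relating $\Cost{\Winner{\CS}}$ to $\Cost{\Opt{\CS}}$ via the triangle inequality from Lemma~\ref{lem:basic-facts}(1): for any two running candidates $i, i'$ we have $\Cost{i} \le \Cost{i'} + \Dist{i}{i'} \le \Cost{i'} + \Dist{i}{\OPT} + \Dist{i'}{\OPT}$. The key geometric fact I would use is that a candidate who wins a positional election cannot be ranked last (or near-last) by too many voters; more usefully, I would argue directly about distances: if $w = \Winner{\CS}$, then for every voter \VP, the number of running candidates closer to \VP than $w$ is bounded, which lets me control $\Dist{\VP}{w}$ by $\Dist{\VP}{i}$ summed over the other candidates $i \in \CS$. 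Concretely, $\Dist{\VP}{w} \le \Dist{\VP}{i}$ whenever $w$ is ranked ahead of $i$ by \VP, so averaging and summing over all $i \in \CS$ gives a bound of the form $\Cost{w} \le \frac{1}{|\CS|} \sum_{i \in \CS} \big(\text{something}\big)$ only if $w$ beats each $i$ in enough voters' rankings — which need not hold for a positional winner.

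The cleaner route, which I would actually pursue, is: let $i^\star = \Opt{\CS}$ be the best running candidate. For the winner $w$, apply \eqref{eqn:triangle-bound} to get $\Cost{w} \le \Cost{\OPT} + \Dist{w}{\OPT}$, and then bound $\Dist{w}{\OPT}$. To do this, observe that $w$ must be ranked first by at least a $1/\NUMCAND$ fraction of voters in a weighted sense — more precisely, since $w$ has the maximum score and scores are in $[0,1]$ with $\ScF[\NUMCAND]{0}=1$, $w$'s total score $\Score{w} \ge \frac{1}{\NUMCAND}\sum_{i \in \CS}\Score{i} \ge \frac{1}{\NUMCAND}\cdot(\text{total first-place mass contribution})$. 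A standard and robust argument: there is a set $U$ of voters of measure at least $1/\NUMCAND$ who rank $w$ strictly better than they rank $i^\star$ — actually this follows because $w$'s score is at least as large as $i^\star$'s, and if $w$ were ranked worse than $i^\star$ by more than a $1-1/\NUMCAND$ fraction, monotonicity of the scoring rule forces $\Score{w} < \Score{i^\star}$ unless equality is tight, handled by tie-breaking. For any voter \VP ranking $w$ ahead of $i^\star$, $\Dist{\VP}{w} \le \Dist{\VP}{i^\star}$, hence $\Dist{w}{\OPT} \le \Dist{w}{\VP} + \Dist{\VP}{\OPT} \le \Dist{\VP}{i^\star} + \Dist{\VP}{\OPT} \le \Dist{\VP}{\OPT} + \Dist{i^\star}{\OPT} + \Dist{\VP}{\OPT}$; integrating over \VP in $U$ (measure $\ge 1/\NUMCAND$) and using $\int_U \Dist{\VP}{\OPT}\,d\VProb{\VP} \le \Cost{\OPT}$ and $\int_U \Dist{\VP}{i^\star}\,d\VProb{\VP} \le \Cost{i^\star}$ yields $\frac{1}{\NUMCAND}\Dist{w}{\OPT} \le \Cost{\OPT} + \Cost{i^\star} + \Cost{\OPT}$, i.e.\ $\Dist{w}{\OPT} = O(\NUMCAND)\cdot\Cost{\OPT}$ after also bounding $\Cost{i^\star}$ — but $i^\star$ is a running candidate, so I cannot assume $\Cost{i^\star}$ is small deterministically.

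This is where randomness of the candidates enters, and this is the step I expect to be the main obstacle: controlling $\Cost{i^\star} = \Cost{\Opt{\CS}}$ from above in expectation (or relating $\Expect{\Dstr{\CS}}$ to $\Cost{\OPT}$). The clean fact is that if candidates are drawn i.i.d.\ from \VPROB, then $\Expect[i \sim \VPROB]{\Cost{i}} = \Expect[i,\VP \sim \VPROB]{\Dist{i}{\VP}}$, which by the triangle inequality is at most $2\Cost{\OPT}$. So a \emph{single} random candidate already has expected cost at most $2\Cost{\OPT}$, and $\Cost{\Opt{\CS}} \le \Cost{i}$ for each running $i$. Combining: $\Cost{\Winner{\CS}} \le \Cost{\OPT} + \Dist{w}{\OPT}$, and I would aim to show $\Expect{\Cost{\Winner{\CS}}} \le (\NUMCAND+1)\,\Expect{\Cost{\Opt{\CS}}}$ by a careful accounting, e.g.\ pairing each voter's contribution to $w$'s score with the cost of whichever candidate that voter ranks first and using that the first-ranked candidate is at distance $\le \Dist{\VP}{i}$ for all $i$. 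Since $\Dstr{\CS} = \Cost{w}/\Cost{\Opt{\CS}}$ and $\Cost{\Opt{\CS}} \ge \Cost{\OPT}$, it suffices to prove $\Cost{w} \le \Cost{\OPT} + \NUMCAND \cdot \Cost{\Opt{\CS}}$ pointwise on the candidate draw: by \eqref{eqn:triangle-bound}, $\Cost{w} \le \Cost{\OPT} + \Dist{w}{\Opt{\CS}} \le \Cost{\OPT} + \Cost{w} + \Cost{\Opt{\CS}}$ is too weak, so instead I would locate a voter \VP with $\Dist{\VP}{w}\le\Dist{\VP}{\Opt{\CS}}$ whose existence is guaranteed by $w$ being the winner (via the $1/\NUMCAND$-mass argument above together with Markov on the scoring mass), giving $\Dist{w}{\Opt{\CS}} \le 2\Dist{\VP}{\Opt{\CS}} \le 2\Cost{\Opt{\CS}}/(\text{that mass})$ — and with mass $\ge 1/\NUMCAND$ this is $\le 2\NUMCAND\,\Cost{\Opt{\CS}}$, slightly lossy; tightening the constant to exactly $\NUMCAND+1$ will require the more careful per-voter charging argument rather than Markov, which is the routine-but-delicate part I would grind through last.
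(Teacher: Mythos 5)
There is a genuine gap, and it lies exactly where you flag it yourself. Your plan hinges on extracting structural information about the winner from the voting rule --- e.g.\ the claim that a set $U$ of voters of measure at least $1/\NUMCAND$ must rank $\Winner{\CS}$ ahead of $\Opt{\CS}$ because $\Score{\Winner{\CS}} \geq \Score{\Opt{\CS}}$ and the scoring rule is monotone. That claim is false for general positional rules: under Veto, every voter may rank $\Opt{\CS}$ first and $\Winner{\CS}$ second-to-last, yet both receive a full point from every voter, so $\Winner{\CS}$ can win on tie-breaking while the mass of voters preferring him to $\Opt{\CS}$ is zero. (It also cannot be the right route in principle, since the paper notes the lemma holds for \emph{every} voting system, positional or otherwise.) Even where such a mass bound holds, your Markov-style conversion of ``mass $\geq 1/\NUMCAND$'' into a distance bound only yields $O(\NUMCAND)$ with a worse constant, and you concede the argument is not closed. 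Separately, your first route's attempt to control $\Cost{\Opt{\CS}}$ in expectation runs into the ratio-of-random-variables problem; you do eventually note the correct fix ($\Cost{\Opt{\CS}} \geq \Cost{\OPT}$ deterministically), but you never combine it with an argument that actually bounds $\Expect{\Cost{\Winner{\CS}}}$.

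The missing idea is that no property of the voting rule is needed at all. The winner is \emph{some} candidate, and all \NUMCAND candidates are i.i.d.\ draws from \VPROB. Hence for the event $\Dist{\OPT}{\Winner{\CS}} \geq r$ to occur, at least one candidate must lie at distance at least $r$ from \OPT, so by a union bound $\Prob[\CS]{\Dist{\OPT}{\Winner{\CS}} \geq r} \leq \NUMCAND \Vol{r}$. Combining the triangle-inequality bound \eqref{eqn:triangle-bound} with the identity \eqref{eqn:expect-rewrite} gives $\Expect[\CS]{\Cost{\Winner{\CS}}} \leq \Cost{\OPT} + \int_0^\infty \NUMCAND \Vol{r}\, \dd r = (\NUMCAND+1)\Cost{\OPT}$, and dividing by the deterministic lower bound $\Cost{\Opt{\CS}} \geq \Cost{\OPT}$ yields expected distortion at most $\NUMCAND+1$ exactly. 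Your observation that a single random candidate has expected cost at most $2\Cost{\OPT}$ is true but plays no role in this argument.
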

\begin{emptyproof}
The proof illustrates some of the key ideas that will be used later in
the more technical proof for a large number of candidates.
We want to bound
\begin{align*}
\Expect[\CS]{\Cost{\Winner{\CS}}}
& \; \stackrel{\eqref{eqn:triangle-bound}}{\leq} \;
\Cost{\OPT} + \Expect[\CS]{\Dist{\OPT}{\Winner{\CS}}}
\; = \; \Cost{\OPT} + \int_0^{\infty} \Prob[\CS]{\Dist{\OPT}{\Winner{\CS}} \geq r} \dd r.
\end{align*}

In order for a candidate at distance at least $r$ from \OPT to win,
it is necessary that at least one such candidate be running.
By a union bound over the \NUMCAND candidates,
the probability of this event is at most
$\Prob[\CS]{\Dist{\OPT}{\Winner{\CS}} \geq r} \leq \NUMCAND \Vol{r}$, so
\begin{align*}
\Expect[\CS]{\Cost{\Winner{\CS}}}
  & \leq \Cost{\OPT} + \NUMCAND \int_0^{\infty} \Vol{r} \dd r
  \stackrel{\eqref{eqn:expect-rewrite}}{=}
  (\NUMCAND+1) \Cost{\OPT}.
\end{align*}

Lower-bounding the cost of the optimum candidate from $\CS$ in terms of
the overall best location \OPT, the expected distortion is

\begin{align*}
  \Expect[\CS]{\frac{\Cost{\Winner{\CS}}}{\Cost{\Opt{\CS}}}}
 &\; \leq \; \Expect[\CS]{\frac{\Cost{\Winner{\CS}}}{\Cost{\OPT}}}
  \; = \; \frac{1}{\Cost{\OPT}} \Expect[\CS]{\Cost{\Winner{\CS}}}
  \; \leq \; \frac{1}{\Cost{\OPT}} \cdot (\NUMCAND+1) \Cost{\OPT}
  \; = \; \NUMCAND+1. \QED
\end{align*}
\end{emptyproof}

In preparation for the case of large \NUMCAND,
we begin with the following technical lemma,
which shows that whenever \eqref{eqn:limit-condition} holds,
it will also hold when the terms on the left-hand side are ``shifted,''
and the right-hand side can be increased by a factor of 2
(or, for that matter, any constant factor).

\begin{lemma} \label{lem:shifted-comparison}
Assume that there exist $y \in (0,1)$ and $\NUMCAND_0$ such that
\eqref{eqn:limit-condition} holds.
Then, there exists $z_0 \in (\half,1)$ such that
for all $\NUMCAND \geq \NUMCAND_0$,
all $z \geq z_0$,
and all integers $0 \leq m \leq (1-z) \cdot \NUMCAND$,
\begin{align}
& z \cdot \sum_{k=0}^{\Ceiling{z \cdot (\NUMCAND-1)}-1}
    \left( \ScF[\NUMCAND]{m+k} - \ScF[\NUMCAND]{m+\Ceiling{z \cdot (\NUMCAND-1)}} \right)
  \; > \;
2 (1-z) \cdot \sum_{k=\NUMCAND - \Ceiling{z \cdot (\NUMCAND-1)}}^{\NUMCAND-1}
       \left( 1 - \ScF[\NUMCAND]{k} \right).
\label{eqn:shifted-limit-condition}
\end{align}
\end{lemma}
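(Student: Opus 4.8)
The plan is to show that each of the three places where \eqref{eqn:limit-condition} can ``degrade'' under the passage to \eqref{eqn:shifted-limit-condition} — (i) replacing $y$ by a larger $z$, (ii) shifting the indices of the left-hand sum by $m$, and (iii) inflating the right-hand side by a factor $2$ — only costs us a controllable amount, and that by choosing $z_0$ close enough to $1$ we can absorb all three costs into the strict gap that \eqref{eqn:limit-condition} already provides. First I would record the monotonicity observation already noted in the text: since each $\ScF[\NUMCAND]{\cdot}$ is non-increasing, if \eqref{eqn:limit-condition} holds for $y$ it holds for every $y' \in [y,1)$; so without loss of generality we may start from a $y$ as close to $1$ as we like, and it suffices to find $z_0 \in (\max(y,\half),1)$ that works.

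The key step is to compare the left-hand side of \eqref{eqn:shifted-limit-condition} with the shift by $m$ against the corresponding unshifted sum. Writing $K = \Ceiling{z(\NUMCAND-1)}$, the shifted left-hand sum is $z \sum_{k=0}^{K-1}\bigl(\ScF[\NUMCAND]{m+k} - \ScF[\NUMCAND]{m+K}\bigr)$. Each summand telescopes: $\ScF[\NUMCAND]{m+k} - \ScF[\NUMCAND]{m+K} = \sum_{j=m+k}^{m+K-1}\bigl(\ScF[\NUMCAND]{j} - \ScF[\NUMCAND]{j+1}\bigr) \ge 0$, so the whole sum equals $\sum_{j=0}^{K-1}(j+1)\bigl(\ScF[\NUMCAND]{m+j} - \ScF[\NUMCAND]{m+j+1}\bigr)$, a weighted sum of the nonnegative ``gap increments'' of the scoring rule over the window $[m, m+K]$. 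The right-hand side, $\sum_{k=\NUMCAND-K}^{\NUMCAND-1}(1-\ScF[\NUMCAND]{k})$, depends only on the top tail of the scores and is therefore unchanged by the shift. The natural move is to run the same argument that proves sufficiency of \eqref{eqn:limit-condition}: relate both sides to the single quantity $\sum_j \bigl(\ScF[\NUMCAND]{j} - \ScF[\NUMCAND]{j+1}\bigr)$-weighted appropriately. Concretely, I would bound the shifted left side from below by the contribution of the first block of indices, $z\sum_{k=0}^{K'-1}\bigl(\ScF[\NUMCAND]{k} - \ScF[\NUMCAND]{K'}\bigr)$ for a suitable $K' \le K$ chosen so that $m + K' \le \Ceiling{y(\NUMCAND-1)}$ whenever $m \le (1-z)\NUMCAND$ and $z$ is close to $1$ — this is where $z_0$ near $1$ is used, since $m$ is then small and $K' = \Ceiling{y(\NUMCAND-1)}$ still fits inside the window after shifting by $m$. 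Likewise the top-tail sum on the right over $K = \Ceiling{z(\NUMCAND-1)}$ terms differs from the one in \eqref{eqn:limit-condition} over $\Ceiling{y(\NUMCAND-1)}$ terms by only the extra $K - \Ceiling{y(\NUMCAND-1)} = O((1-y)\NUMCAND) + O(1)$ terms, each at most $1$; but each such extra term also forces, by monotonicity and $\ScF[\NUMCAND]{\NUMCAND-1}=0$, a matching increment on the left, so the overhead is again of relative order $(1-y)$.

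Putting these estimates together, \eqref{eqn:shifted-limit-condition} reduces to an inequality of the form $(1 - O(1-y))\cdot \mathrm{LHS}_y \;>\; 2\,(1 + O(1-y))\cdot \mathrm{RHS}_y$ where $\mathrm{LHS}_y > \mathrm{RHS}_y$ by a fixed strict margin (after replacing the original $y$ by one close to $1$ and folding the strictness of \eqref{eqn:limit-condition} through the monotonicity step). Since $\mathrm{RHS}_y$ can be assumed bounded away from $0$ relative to $\mathrm{LHS}_y$ — if $\mathrm{RHS}_y = 0$ the claim is immediate because the LHS is nonnegative and we just need it positive, which holds as long as the scoring rule is non-constant on the relevant window — the factor-$2$ loss cannot be defeated by the ``close to $1$'' choice alone; so I expect the genuine content of the lemma to be the case $\mathrm{RHS}_y > 0$, and there the right maneuver is to first pick $y^\ast \in (y,1)$ so large that $\mathrm{LHS}_{y^\ast} > 2\,\mathrm{RHS}_{y^\ast}$ (possible because as $y^\ast \to 1$ the left sum grows to its full value $\sum_j (j+1)(\ScF[\NUMCAND]{j}-\ScF[\NUMCAND]{j+1})$ while the right tail sum shrinks toward $1-\ScF[\NUMCAND]{\NUMCAND-1}$-type boundary terms of order $o(\NUMCAND)$ relative to the left — this is exactly the calculation establishing that \eqref{eqn:limit-condition} is ``monotone and self-improving'' in $y$), and only then absorb the shift-by-$m$ perturbation by taking $z_0$ even closer to $1$. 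The main obstacle, then, is verifying that the left sum really does dominate the right by an arbitrarily large constant factor as $y \to 1$; this should follow from the fact that the right-hand tail sum over $\Ceiling{(1-y)(\NUMCAND-1)}$ top positions is $O((1-y)\NUMCAND)$ while the left-hand sum, once it includes all but an $(1-y)$-fraction of positions, captures a $(1 - O(1-y))$ share of the full telescoped weight — and the full telescoped weight is itself at least a constant times $\NUMCAND$ times the original gap in \eqref{eqn:limit-condition}, since \eqref{eqn:limit-condition} with the original $y$ already certifies the left side is a positive multiple of $\NUMCAND$ in the worst case. I would carry out this chain of inequalities carefully, tracking the $O(1)$ ceiling corrections separately so they are dominated for $\NUMCAND \ge \NUMCAND_0$, and conclude by fixing $z_0$ as the resulting threshold.
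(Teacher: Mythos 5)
There is a genuine gap, and it sits exactly at the heart of the lemma: handling the shift by $m$. Your key concrete step --- lower-bounding the shifted sum $\sum_{k=0}^{K-1}\left(\ScF[\NUMCAND]{m+k}-\ScF[\NUMCAND]{m+K}\right)$ by an \emph{unshifted} initial-block sum $\sum_{k=0}^{K'-1}\left(\ScF[\NUMCAND]{k}-\ScF[\NUMCAND]{K'}\right)$ --- does not follow from monotonicity and is false in general: monotonicity gives $\ScF[\NUMCAND]{m+k}\le\ScF[\NUMCAND]{k}$, so the termwise comparison goes the wrong way (for a Plurality-like rule the shifted sum is $0$ for every $m\ge 1$ while the unshifted block sum is $1$). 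Also your bookkeeping condition ``$m+K'\le\Ceiling{y(\NUMCAND-1)}$ with $K'=\Ceiling{y(\NUMCAND-1)}$'' forces $m=0$. What index containment and monotonicity really give is only a bound by the \emph{tail} of the original sum, $\sum_{k=m}^{Y-1}\left(\ScF[\NUMCAND]{k}-\ScF[\NUMCAND]{Y}\right)$ with $Y=\Ceiling{y(\NUMCAND-1)}$, i.e.\ the left-hand side of \eqref{eqn:limit-condition} with its first $m$ terms dropped. The missing idea is how to show those dropped terms cannot carry most of the weight. The paper does this by setting $S_1=\sum_{k=0}^{\NUMCAND-1}(1-\ScF[\NUMCAND]{k})$, $S_2=\sum_{k=0}^{m-1}(\ScF[\NUMCAND]{k}-\ScF[\NUMCAND]{Y})$, $S_3=\sum_{k=m}^{Y-1}(\ScF[\NUMCAND]{k}-\ScF[\NUMCAND]{Y})$, reducing \eqref{eqn:shifted-limit-condition} to $S_1\le\frac{z}{2(1-z)}S_3$, and then combining two bounds: $S_2+S_3\ge\frac{1-y}{2y}S_1$ (from \eqref{eqn:limit-condition} plus $y\ge\half$) and, crucially, $S_2\le m\,(1-\ScF[\NUMCAND]{Y})\le(1-z)(\NUMCAND-1)(1-\ScF[\NUMCAND]{Y})$ while $S_1+S_3\ge(\NUMCAND-m)(1-\ScF[\NUMCAND]{Y})$, hence $S_2\le\frac{1-z}{z}(S_1+S_3)$; solving these for $S_1$ with $z_0=\frac56+\frac{y}{6}$ finishes the proof. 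Nothing in your sketch plays the role of this second bound, and without some use of the hypothesis together with the smallness of $m$ relative to $\NUMCAND$, the shift simply cannot be absorbed.

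The remaining ingredients of your plan are closer to the mark but also imprecise. Absorbing the factor $2$ by pushing the quantile toward $1$ is sound and can be made uniform in $\NUMCAND$ (because the top $\Ceiling{y(\NUMCAND-1)}$ terms of the non-decreasing sequence $1-\ScF[\NUMCAND]{k}$ account for roughly a $y$ fraction of $S_1$, so the right-hand side at a boosted quantile is at most a constant multiple of $1-y^*$ times the right-hand side at $y$); this is in the same spirit as the paper's algebra. However, your justification leans on the claim that \eqref{eqn:limit-condition} certifies the left-hand sum to be $\Omega(\NUMCAND)$, which is not true (both sides of \eqref{eqn:limit-condition} can be $\Theta(1)$, e.g.\ for rules whose scores drop by $O(1/\NUMCAND^2)$ per position before a final plunge to $0$), and the assertion that each extra right-hand term ``forces a matching increment on the left'' is not an argument. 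So, as written, the proposal does not establish the lemma.
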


We now flesh out the details of the construction.
By Lemma~\ref{lem:shifted-comparison},
there exists $z_0 \in (\half,1)$ and $\NUMCAND_0$ such that
\eqref{eqn:shifted-limit-condition} holds
for all $z \geq z_0$,
all $\NUMCAND \geq \NUMCAND_0$,
and all integers $0 \leq m \leq (1-z) \cdot \NUMCAND$.
For simplicity of notation, write $z \eqdef z_0$, and
let $\SBM \eqdef (1-\frac{1}{\e}) + \frac{1}{\e} \cdot z \in (z,1)$
and $\NUMCANDH \eqdef \max(\NUMCAND_0, \frac{1}{1-z})$.
Notice that $\SBM, z, \NUMCANDH$ only depend on \VS,
but not on the metric space or number of voters.

Let $\tr \eqdef \inf \Set{r}{\VProb{\Ball{\OPT}{r}} \geq \SBM}$,
so that $\VProb{\Ball{\OPT}{\tr}} \geq \SBM$,
and $\VProb{\Set{\VP}{\Dist{\OPT}{\VP} \geq \tr}} \geq 1-\SBM$.
(Both inequalities hold with equality unless there is a discrete
point mass at distance \tr from \OPT.)


Consider any $r \geq \tr$ and write $\LB \eqdef \Ball{\OPT}{3r}$
and $\SB \eqdef \Ball{\OPT}{r}$, as depicted in Figure~\ref{fig:sufficiency}.
When \NUMCAND candidates are drawn i.i.d.~from \VPROB,
the expected fraction of candidates drawn from outside of \SB is
exactly $\Vol{r} \leq 1-\SBM$.
Let \Event[r] be the event that more than
$(1-z) \NUMCAND$ candidates are from outside \SB.
Lemma~\ref{lem:event-low-probability} uses Chernoff bounds and the
definitions of the parameters to show that \Event[r] happens with
sufficiently small probability;
Lemma~\ref{lem:event-implies-winner} then shows that
unless \Event[r] happens, the distortion is constant.

\begin{figure}
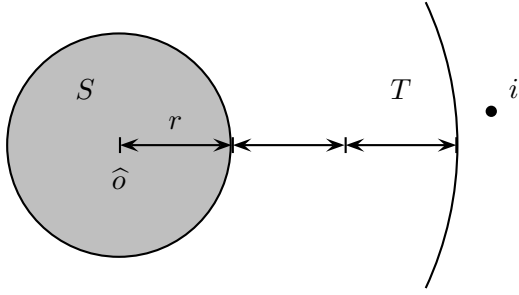

\begin{centering}
\psset{unit=1.5cm,arrowsize=0.1 1}
\pspicture(-1.5,-1.2)(4,1.2)
\pscircle[fillstyle=solid,fillcolor=lightgray](0,0){1}
\psline{|<->}(0,0)(1,0)
\psline{|<->}(1,0)(2,0)
\psline{|<->|}(2,0)(3,0)
\psarc{-}(0,0){3}{-25}{25}

\pscircle[fillstyle=solid,fillcolor=black](3.3,0.3){0.05}
\rput(3.5,0.5){$i$}

\rput(0,-0.3){$\hat o$}
\rput(0.5,0.2){$r$}
\rput(-0.3,0.5){$S$}
\rput(2.5,0.5){$T$}
\endpspicture
\caption{$\LB = \Ball{\OPT}{3r}$ and $\SB = \Ball{\OPT}{r}$. Most of the voters are in \SB. Lemma~\ref{lem:event-implies-winner} states that whenever most of the candidates are from $T$, the winner must come from $T$. The reason is that for any $i \notin \LB$, even an average candidate in $\SB$ beats $i$; in particular, the best candidate from \SB must beat $i$.}
\label{fig:sufficiency}
\end{centering}
\end{figure}

\begin{lemma} \label{lem:event-low-probability}
$\Prob{\Event[r]} \leq \frac{\e}{1-z} \cdot \Vol{r}$.
\end{lemma}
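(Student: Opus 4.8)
The plan is to recognize $\Event[r]$ as an upper-tail event for a Binomial random variable and apply a multiplicative Chernoff bound; the parameters \SBM and \NUMCANDH fixed in the preceding discussion are exactly what makes the resulting bound collapse to $\frac{\e}{1-z}\Vol{r}$. Concretely, since the \NUMCAND candidates are drawn i.i.d.\ from \VPROB and a single draw lies outside the closed ball $\SB = \Ball{\OPT}{r}$ with probability exactly $\VProb{\Set{\VP}{\Dist{\OPT}{\VP} > r}} = \Vol{r}$, the number $N_r$ of candidates outside \SB has distribution $\mathrm{Bin}(\NUMCAND,\Vol{r})$, and $\Event[r] = \{N_r > (1-z)\NUMCAND\}$, with mean $\Expect{N_r} = \NUMCAND\Vol{r}$. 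Because $r \geq \tr$, monotonicity of $\VProb{\Ball{\OPT}{\cdot}}$ and the defining property of \tr give $\VProb{\Ball{\OPT}{r}} \geq \VProb{\Ball{\OPT}{\tr}} \geq \SBM$, so
\[
\Vol{r} \;\leq\; 1-\SBM \;=\; \frac{1-z}{\e},
\]
using $\SBM = (1-\frac{1}{\e}) + \frac{1}{\e}\,z$; in particular $\NUMCAND\Vol{r} < (1-z)\NUMCAND$, i.e.\ the threshold lies strictly above the mean.

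Next I would apply the standard multiplicative Chernoff bound $\Prob{X \geq t} \leq \left(\frac{\e m}{t}\right)^{t}$, valid for a sum $X$ of independent Bernoulli variables with mean $m$ and any $t \geq m$. Taking $X = N_r$ and $t = (1-z)\NUMCAND$ (legitimate by the previous paragraph) yields
\[
\Prob{\Event[r]} \;\leq\; \Prob{N_r \geq (1-z)\NUMCAND} \;\leq\; \left(\frac{\e\,\NUMCAND\Vol{r}}{(1-z)\NUMCAND}\right)^{(1-z)\NUMCAND} \;=\; \left(\frac{\e\,\Vol{r}}{1-z}\right)^{(1-z)\NUMCAND}.
\]
Writing $\rho \eqdef \frac{\e\,\Vol{r}}{1-z}$, the displayed bound $\Vol{r} \leq (1-z)/\e$ says precisely that $\rho \leq 1$, while $\NUMCAND \geq \NUMCANDH \geq \frac{1}{1-z}$ forces the exponent $(1-z)\NUMCAND$ to be at least $1$. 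For $\rho \in [0,1]$ and exponent at least $1$ one has $\rho^{(1-z)\NUMCAND} \leq \rho$, hence $\Prob{\Event[r]} \leq \rho = \frac{\e}{1-z}\Vol{r}$, as desired. (When $\Vol{r} = 0$ the inequality is trivial, since then $N_r = 0$ almost surely.)

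I do not expect a real obstacle here; the argument is short. The two points needing care are using the clean Chernoff form that produces the exponent $(1-z)\NUMCAND$ exactly — a coarser additive tail bound would not be strong enough — and observing that the two parameter choices pull in opposite directions in just the right way: $\SBM = (1-\frac{1}{\e}) + \frac{1}{\e}\,z$ pushes the base $\rho$ down to at most $1$, whereas $\NUMCANDH \geq \frac{1}{1-z}$ pushes the exponent up to at least $1$, so that exponentiation can only decrease $\rho$.
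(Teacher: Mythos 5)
Your proof is correct and follows essentially the same route as the paper: both apply a multiplicative Chernoff bound to the Binomial count of candidates outside $\SB$ to reach the intermediate bound $\left(\frac{\e\,\Vol{r}}{1-z}\right)^{(1-z)\NUMCAND}$, then use $\Vol{r}\leq 1-\SBM=\frac{1-z}{\e}$ (base at most $1$) and $\NUMCAND\geq\NUMCANDH\geq\frac{1}{1-z}$ (exponent at least $1$) to conclude. The only difference is cosmetic — you invoke the cleaner $(\e m/t)^t$ form of the Chernoff bound, whereas the paper starts from the $\left(\e^{\delta}/(1+\delta)^{1+\delta}\right)^{\Expect{Z}}$ form and simplifies — and your explicit handling of the $\Vol{r}=0$ edge case is a harmless extra.
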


\begin{proof}
By the Chernoff bound $\Prob{Z > (1+\delta) \Expect{Z}}
< \left(\frac{\e^\delta}{(1+\delta)^{1+\delta}} \right)^{\Expect{Z}}$,
applied with $\Expect{Z} = \Vol{r} \cdot \NUMCAND$ and
$\delta = \frac{1-z}{\Vol{r}}-1 > 0$,
the probability of \Event[r] is at most
\begin{align*}
  \Prob{\Event[r]}
  \; & \leq \;
\left(\frac{\e^{\frac{1-z}{\Vol{r}}-1}}{(\frac{1-z}{\Vol{r}})^{\frac{1-z}{\Vol{r}}}} \right)^{\Vol{r} \cdot \NUMCAND}
\; = \;
\left(\frac{\e^{1-z-\Vol{r}}}{(\frac{1-z}{\Vol{r}})^{1-z}} \right)^{\NUMCAND}
  \; \leq \; 
\left(\frac{\e \cdot \Vol{r}}{1-z} \right)^{(1-z) \cdot \NUMCAND}.
\end{align*}

Recall that $\SBM \eqdef (1-\frac{1}{\e}) + \frac{1}{\e} \cdot z \in (z,1)$.
Because $r \geq \tr$, we have that
$\Vol{r} \leq 1-\SBM = \frac{1-z}{\e}$;
in particular, $\frac{\e \cdot \Vol{r}}{1-z} \leq 1$,
so the probability can be upper-bounded by making the exponent
$(1-z) \cdot \NUMCAND$ as small as possible.
Because $\NUMCAND \geq \NUMCANDH \geq \frac{1}{1-z}$,
the exponent is lower-bounded by 1.
Thus, we obtain that the probability of \Event[r] is at most
$\frac{\e}{1-z} \cdot \Vol{r}$.
\end{proof}

\begin{lemma} \label{lem:event-implies-winner}
Whenever \Event[r] does not happen, the winner of the election is from
\Ball{\OPT}{3r}.
\end{lemma}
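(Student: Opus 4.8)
The plan is to show that when $\Event[r]$ fails, so at least $zn$ candidates lie inside $\SB = \Ball{\OPT}{r}$, every candidate $i \notin \LB = \Ball{\OPT}{3r}$ has strictly smaller total score than the \emph{average} score of the candidates in $\SB$, hence strictly smaller than the best such candidate's score, so $i$ cannot win. First I would set up the ranking geometry: let $S_{\mathrm{in}}$ denote the set of candidates drawn inside $\SB$, with $|S_{\mathrm{in}}| \ge zn$, and fix any candidate $i$ at distance $> 3r$ from $\OPT$. For a voter $\VP \in \SB$ (of which there are a $\VProb{\SB} \ge \SBM \ge z$ fraction), every candidate $c \in S_{\mathrm{in}}$ satisfies $\Dist{\VP}{c} \le \Dist{\VP}{\OPT} + \Dist{\OPT}{c} \le r + r = 2r$, while $\Dist{\VP}{i} \ge \Dist{\OPT}{i} - \Dist{\VP}{\OPT} > 3r - r = 2r$; so such a voter ranks \emph{all} of $S_{\mathrm{in}}$ strictly ahead of $i$. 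Consequently, on the ballot of any voter in $\SB$, candidate $i$ sits in some position $\ge |S_{\mathrm{in}}| \ge \Ceiling{z(n-1)}$ (there are at least that many candidates she prefers), so $i$ earns at most $\ScF[\NUMCAND]{\Ceiling{z(n-1)}}$ from her; and the candidates of $S_{\mathrm{in}}$ occupy, in aggregate, the position set $\{0,1,\ldots,|S_{\mathrm{in}}|-1\}$ on her ballot or better, so their \emph{total} score from her is at least $\sum_{k=0}^{\Ceiling{z(n-1)}-1} \ScF[\NUMCAND]{k}$ (using monotonicity of $\SCF[\NUMCAND]$ and $|S_{\mathrm{in}}| \ge \Ceiling{z(n-1)}$).

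Next I would split the voter population into $\SB$ and its complement and bound the score difference $\sum_{c \in S_{\mathrm{in}}} \Score{c} - |S_{\mathrm{in}}| \cdot \Score{i}$ from below. From voters in $\SB$: each contributes at least $\sum_{k=0}^{\Ceiling{z(n-1)}-1}\ScF[\NUMCAND]{k} - |S_{\mathrm{in}}| \cdot \ScF[\NUMCAND]{\Ceiling{z(n-1)}}$ to this difference, and since $|S_{\mathrm{in}}| \ge \Ceiling{z(n-1)}$ this is at least $\sum_{k=0}^{\Ceiling{z(n-1)}-1}(\ScF[\NUMCAND]{k} - \ScF[\NUMCAND]{\Ceiling{z(n-1)}})$ — i.e.\ exactly the shape of the left-hand side of \eqref{eqn:shifted-limit-condition} with shift $m=0$ and parameter $z$, times $\VProb{\SB} \ge z$. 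From voters outside $\SB$ (a $\Vol{r} \le 1-\SBM$ fraction, but I only need $\le 1-z$): here $i$ could gain and the $S_{\mathrm{in}}$ candidates could lose, but the per-voter loss is bounded: $i$ earns at most $1$ per voter, while the $|S_{\mathrm{in}}|$ candidates of $S_{\mathrm{in}}$ together earn at least $\sum_{k=n-|S_{\mathrm{in}}|}^{n-1}\ScF[\NUMCAND]{k}$ (they occupy at worst the bottom $|S_{\mathrm{in}}|$ positions), so the net loss from such a voter is at most $|S_{\mathrm{in}}| - \sum_{k=n-|S_{\mathrm{in}}|}^{n-1}\ScF[\NUMCAND]{k} = \sum_{k=n-|S_{\mathrm{in}}|}^{n-1}(1-\ScF[\NUMCAND]{k}) \le \sum_{k=n-\Ceiling{z(n-1)}}^{n-1}(1-\ScF[\NUMCAND]{k})$, again using monotonicity and $|S_{\mathrm{in}}| \ge \Ceiling{z(n-1)}$. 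Multiplying the two per-voter bounds by the respective population masses and invoking \eqref{eqn:shifted-limit-condition} (which gives the $\SB$-gain strictly exceeding \emph{twice} the $\SB^c$-loss, and $\VProb{\SB} \ge z$, $\Vol{r} \le 1-z \le$ the coefficient used there) yields $\sum_{c \in S_{\mathrm{in}}}\Score{c} > |S_{\mathrm{in}}| \cdot \Score{i}$, so some $c \in S_{\mathrm{in}}$ beats $i$; since $i \notin \LB$ was arbitrary, the winner lies in $\LB$.

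The main obstacle is the bookkeeping of which ballot positions $i$ and the members of $S_{\mathrm{in}}$ can occupy, and making the monotonicity reductions rigorous: I must argue that replacing $|S_{\mathrm{in}}|$ by the smaller quantity $\Ceiling{z(n-1)}$ can only hurt the lower bound on the score gap — for the $\SB$ voters this uses that $\ScF[\NUMCAND]{k} - \ScF[\NUMCAND]{\Ceiling{z(n-1)}} \ge 0$ for $k < \Ceiling{z(n-1)}$ together with $\ScF[\NUMCAND]{\cdot} \le \ScF[\NUMCAND]{\Ceiling{z(n-1)}}$ for the extra indices, and for the $\SB^c$ voters it uses that $1 - \ScF[\NUMCAND]{k} \ge 0$ is non-decreasing in $k$. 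I should also be careful that the factor-$2$ slack in \eqref{eqn:shifted-limit-condition} is exactly what absorbs the discrepancy between $\VProb{\SB}$ being only $\ge z$ (not $=z$) and $\Vol{r}$ being only $\le 1-z$ (not $=1-z$) — so strictness is preserved. Once these reductions are in place, the inequality $\sum_{c}\Score{c} > |S_{\mathrm{in}}|\Score{i}$ is immediate and the conclusion follows.
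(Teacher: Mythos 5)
There is a genuine gap in your accounting for voters in \SB. You assert that on such a voter's ballot the candidates drawn from \SB occupy the positions $\{0,1,\ldots,|S_{\mathrm{in}}|-1\}$ ``or better,'' and you accordingly invoke \eqref{eqn:shifted-limit-condition} only with shift $m=0$. But a voter in $\SB=\Ball{\OPT}{r}$ is only guaranteed to rank all of $S_{\mathrm{in}}$ (at distance $\le 2r$ from her) ahead of candidates outside $\LB=\Ball{\OPT}{3r}$ (at distance $>2r$); candidates in the annulus $\LB\setminus\SB$ can be arbitrarily close to her and may sit above every candidate of $S_{\mathrm{in}}$ on her ballot. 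So $i$'s position is $\SBC+m_j$ and the $S_{\mathrm{in}}$-candidates may occupy only positions $m_j,\ldots,\SBC+m_j-1$, where $m_j\ge 0$ depends on the voter and can be as large as the number of candidates outside \SB (up to $(1-z)\NUMCAND$ when \Event[r] fails). The correct per-voter gain is therefore $\sum_{k=0}^{\SBC-1}\bigl(\ScF[\NUMCAND]{k+m_j}-\ScF[\NUMCAND]{\SBC+m_j}\bigr)$, and the unshifted inequality does not control the shifted sums: for a rule with $\ScF[\NUMCAND]{0}=1$, $\ScF[\NUMCAND]{k}=\half$ for $1\le k\le \NUMCAND-2$, and $\ScF[\NUMCAND]{\NUMCAND-1}=0$, the shifted gain vanishes for every $m\ge 1$ while the $m=0$ sum is $\half$. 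This is precisely why the paper proves Lemma~\ref{lem:shifted-comparison} for all shifts $0\le m\le(1-z)\NUMCAND$ and applies \eqref{eqn:shifted-limit-condition} with the voter-dependent shift; your proof as written omits the key idea that makes the $\SB$-voter bound valid.

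A second, smaller error concerns the voters outside \SB: you bound the per-voter loss by $\sum_{k=\NUMCAND-\SBC}^{\NUMCAND-1}(1-\ScF[\NUMCAND]{k})\le\sum_{k=\NUMCAND-\Ceiling{z(\NUMCAND-1)}}^{\NUMCAND-1}(1-\ScF[\NUMCAND]{k})$, but this inequality is reversed: since $\SBC\ge\Ceiling{z(\NUMCAND-1)}$, the left-hand sum contains all terms of the right-hand one plus additional nonnegative terms. (A similar reversal appears in your $\SB$-voter step when you replace $|S_{\mathrm{in}}|\cdot\ScF[\NUMCAND]{\Ceiling{z(\NUMCAND-1)}}$ by $\Ceiling{z(\NUMCAND-1)}\cdot\ScF[\NUMCAND]{\Ceiling{z(\NUMCAND-1)}}$; one should instead first use $\ScF[\NUMCAND]{\Pos[\VP]{i}}\le\ScF[\NUMCAND]{\SBC}$ and only then truncate the sum.) Relatedly, the factor $2$ in \eqref{eqn:shifted-limit-condition} is not there to absorb slack between $\VProb{\SB}$ and $z$ or between $\Vol{r}$ and $1-z$ --- those comparisons already go the right way via $\SBM\ge z$ --- it is needed exactly to pass from the tail sum indexed by $\Ceiling{z(\NUMCAND-1)}$ to the longer tail sum indexed by \SBC, using $z>\half$ and monotonicity so that the former is at least half of the latter. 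With the voter-dependent shift incorporated and this comparison done as in the paper, your average-versus-worst-candidate strategy (which is the paper's strategy) does go through; as written, both steps fail.
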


\begin{emptyproof}
Let $Z \eqdef \Ceiling{z \cdot (\NUMCAND-1)}$.
Assume that exactly $\SBC \geq Z$ out of the \NUMCAND candidates are
drawn from \SB.
Consider a candidate $i \notin \LB$.
We will compare the average number of points of candidates in \SB with
the maximum possible number of points of candidate $i$,
and show that the former exceeds the latter.

\begin{itemize}
\item Each voter $j \notin \SB$ gives at most one point to $i$.
On the other hand, even if $j$ ranks all of \SB in the last \SBC positions,
the total number of points assigned by $j$ to \SB is at least
$\sum_{k=\NUMCAND-\SBC}^{\NUMCAND-1} \ScF[\NUMCAND]{k}$.
The difference between the number of votes to $i$ and the average
number of votes to candidates in \SB is thus at most
\[
  1 - \left( \frac{1}{\SBC} \cdot
    \sum_{k=\NUMCAND-\SBC}^{\NUMCAND-1} \ScF[\NUMCAND]{k}
    \right)
\; = \; \frac{1}{\SBC} \cdot
\sum_{k=\NUMCAND-\SBC}^{\NUMCAND-1} (1-\ScF[\NUMCAND]{k}).
\]

Because no more than a $1-\SBM$ fraction of voters are
strictly outside \SB,
the total advantage of $i$ over an average candidate in \SB resulting
from such voters is at most
\[
\Delta_i \eqdef \frac{1}{\SBC} \cdot (1-\SBM) \cdot 
\sum_{k=\NUMCAND-\SBC}^{\NUMCAND-1} (1-\ScF[\NUMCAND]{k}).
\]

\item Each voter $j \in \SB$ will rank all candidates in \SB
(who are at distance at most $2r$ from her)
ahead of all candidates outside \LB
(who are at distance strictly more than $3r-r = 2r$ from her).

Let $m \geq 0$ be such that $j$ ranks $i$ in position $\SBC + m$.
Then, $i$ gets \ScF[\NUMCAND]{\SBC+m} points from $j$.
Because $j$ ranks all of \SB ahead of $i$, she gives at least
$\sum_{k=0}^{\SBC-1} \ScF[\NUMCAND]{k+m}$ points in total to \SB.
Hence, the difference in the number of points that $j$ gives to
an average candidate in \SB and the number of votes that $j$ gives to
$i$ is at least
\begin{align*}
\Big( \frac{1}{\SBC} \cdot \sum_{k=0}^{\SBC-1} & \ScF[\NUMCAND]{k+m} \Big)
- \ScF[\NUMCAND]{\SBC+m}
\; = \;
\frac{1}{\SBC} \cdot 
\sum_{k=0}^{\SBC-1} \left(\ScF[\NUMCAND]{k+m} - \ScF[\NUMCAND]{\SBC+m}\right).
\end{align*}

Because at least a $\SBM$ fraction of voters are in \SB,
the total advantage of an average candidate in \SB
resulting from voters in $B$ is at least
\[
\Delta_{\SB} \eqdef \frac{1}{\SBC} \cdot \SBM \cdot
\sum_{k=0}^{\SBC-1} \left(\ScF[\NUMCAND]{k+m} - \ScF[\NUMCAND]{\SBC+m}\right).
\]
\end{itemize}

We show that $\Delta_{\SB} > \Delta_i$, using condition
\eqref{eqn:shifted-limit-condition}.
Because \SCF[\NUMCAND] is monotone non-increasing,
and because $\SBC \geq Z$,
we get that
 \[
 \sum_{k=0}^{\SBC-1} \left(\ScF[\NUMCAND]{k+m} - \ScF[\NUMCAND]{\SBC+m}\right)
 \; \geq \; \sum_{k=0}^{Z-1} \left(\ScF[\NUMCAND]{k+m} - \ScF[\NUMCAND]{Z+m}\right)
 \; \stackrel{\eqref{eqn:shifted-limit-condition}}{>} \;
 \frac{2(1-z)}{z} \cdot      
 \sum_{k=\NUMCAND - Z}^{\NUMCAND-1} \left(1-\ScF[\NUMCAND]{k}\right).
 \]

Because $z > \half$ and \SCF[\NUMCAND] is monotone,
we get that
$\sum_{k=\NUMCAND - Z}^{\NUMCAND-1} (1-\ScF[\NUMCAND]{k})
\geq \half \sum_{k=\NUMCAND - \SBC}^{\NUMCAND-1} (1-\ScF[\NUMCAND]{k})$.
Hence,
\begin{align*}
\Delta_{\SB} & \; > \;
     \frac{1}{\SBC} \cdot \SBM \cdot \frac{1-z}{z} \cdot
        \sum_{k=\NUMCAND - \SBC}^{\NUMCAND-1} (1-\ScF[\NUMCAND]{k})
  \; \stackrel{\SBM \geq z}{\geq} \;
  \frac{1}{\SBC} \cdot (1-\SBM) \cdot
  \sum_{k=\NUMCAND - \SBC}^{\NUMCAND-1} (1-\ScF[\NUMCAND]{k})
   \; = \; \Delta_i.\QED
\end{align*}
\end{emptyproof}

We now wrap up the sufficiency portion of the proof of
Theorem~\ref{thm:main-generalized-borda}.
We distinguish two cases, based on the number of candidates \NUMCAND.
If $\NUMCAND < \NUMCANDH$,
then Lemma~\ref{lem:linear-distortion} implies an upper bound of
$\NUMCAND+1 \leq \NUMCANDH \leq \max(\NUMCAND_0, \frac{1}{1-z}) = O(1)$
on the expected distortion. 
Now assume that $\NUMCAND \geq \NUMCANDH$.
Recall that $\tr \eqdef \inf \Set{r}{\VProb{\Ball{\OPT}{r}} \geq y}$.
By Lemmas~\ref{lem:event-low-probability} and \ref{lem:event-implies-winner},
for any $r \geq \tr$, the probability that the
election's winner is outside \Ball{\OPT}{3r} is at most
$\frac{\e}{1-z} \cdot \Vol{r}$. The rest of the proof is similar to that of
Lemma~\ref{lem:linear-distortion}.
We again use that 
\[ \Expect[\CS]{\Cost{\Winner{\CS}}} \leq
  \Cost{\OPT} + \int_0^{\infty} \Prob[\CS]{\Dist{\OPT}{\Winner{\CS}} \geq r} \dd r, \]
and bound
\begin{align*}
  \int_0^{\infty} \Prob[\CS]{\Dist{\OPT}{\Winner{\CS}} \geq r} \dd r
& \; = \; \int_0^{3\tr} \Prob[\CS]{\Dist{\OPT}{\Winner{\CS}} \geq r} \dd r
  + \int_{3\tr}^{\infty} \Prob[\CS]{\Dist{\OPT}{\Winner{\CS}} \geq r} \dd r
\\ & \; \stackrel{\text{Lemmas~\ref{lem:event-low-probability}, \ref{lem:event-implies-winner}}}{\leq} \; 
       \int_0^{3\tr} 1 \; \dd r
     + \int_{3\tr}^{\infty} \frac{\e}{1-z} \cdot \Vol{r} \dd r
\\ & \; \leq \;
     3\tr + \frac{\e}{1-z} \cdot \int_0^{\infty} \Vol{r} \dd r
\\ & \; \stackrel{\eqref{eqn:expect-rewrite}}{=} \;
       3\tr + \frac{\e}{1-z} \cdot \Cost{\OPT}.
 \end{align*}

To upper-bound \tr, recall that at least a $1-\SBM$ fraction
of voters are outside of \Ball{\OPT}{\tr} or on the boundary.
Therefore, by Inequality~\eqref{eqn:opt-lower-bound},
$\Cost{\OPT} \geq \tr \cdot (1-\SBM)$.
Substituting this bound,
the expected cost of the winning candidate is at most
\[\left(1 + \frac{3}{1-\SBM} + \frac{\e}{1-z}\right) \cdot \Cost{\OPT}
= \left(1 + \frac{4}{1-\SBM}\right) \cdot \Cost{\OPT} = O(\Cost{\OPT}),\]
as $y$ depends only on the voting system \VS,
but not on the metric space or the number of candidates.
This completes the proof of sufficiency.

\subsubsection{Proof of Lemma~\ref{lem:shifted-comparison}}
\label{sec:shifted-comparison}

\begin{extraproof}{Lemma~\ref{lem:shifted-comparison}}
Because condition~\eqref{eqn:limit-condition} holds for all $y' > y$,
we may assume that $y \geq \half$.
Define $z_0 \eqdef \frac{5}{6} + \frac{y}{6}$,
and consider any $z \geq z_0$.
Fix $\NUMCAND \geq \NUMCAND_0$, and write
$Y := \Ceiling{y (\NUMCAND-1)}$ and $Z := \Ceiling{z (\NUMCAND-1)}$.
Let $m \leq (1-z) (\NUMCAND-1)$ be arbitrary.
We define
\begin{align*}
S_1 &\eqdef \sum_{k=0}^{\NUMCAND-1}
       \left( 1 - \ScF[\NUMCAND]{k} \right), \\
S_2 &\eqdef \sum_{k=0}^{m-1}
       \left( \ScF[\NUMCAND]{k} - \ScF[\NUMCAND]{Y} \right), \\
S_3 &\eqdef \sum_{k=m}^{Y-1}
       \left( \ScF[\NUMCAND]{k} - \ScF[\NUMCAND]{Y} \right).
\end{align*}
By monotonicity of \SCF[\NUMCAND],
\[
\sum_{k=0}^{Z-1} \left( \ScF[\NUMCAND]{m+k} - \ScF[\NUMCAND]{m+Z} \right)
  \; \geq \; S_3;
\]
furthermore,
$\sum_{k=\NUMCAND - Z}^{\NUMCAND-1} \left( 1 - \ScF[\NUMCAND]{k} \right)
\leq S_1$.
Therefore, it suffices to show that
$S_1 \leq \frac{z}{2 (1-z)} S_3$.
By condition~\eqref{eqn:limit-condition} and monotonicity of \SCF[\NUMCAND],
and because $y \geq \half$,
\begin{align*}
S_2 + S_3
& \; = \; \sum_{k=0}^{Y-1} \left( \ScF[\NUMCAND]{k} - \ScF[\NUMCAND]{Y} \right)
  \; \geq \; \frac{1-y}{y} \sum_{k=\NUMCAND-Y}^{\NUMCAND-1} \left( 1 - \ScF[\NUMCAND]{k} \right)
\; \geq \; \frac{1-y}{2y} S_1.
\end{align*}

To upper-bound $S_1$ in terms of $S_3$,
we show that the contribution of $S_2$ to the preceding sum is small,
and upper-bound $S_2$ in terms of $S_1 + S_3$.
Because $S_2 \leq (1-z) (\NUMCAND-1) \cdot \left( 1 - \ScF[\NUMCAND]{Y} \right)$,
using the monotonicity of \SCF[\NUMCAND], we can write
\begin{align*}
S_1 + S_3
& = \sum_{k=0}^{Y-1}
       \left( 1 - \ScF[\NUMCAND]{k} \right) + S_3
  + \sum_{k=Y}^{\NUMCAND-1} \left( 1 - \ScF[\NUMCAND]{k} \right)
\\ & \geq \sum_{k=m}^{Y-1}
       \left( 1 - \ScF[\NUMCAND]{Y} \right)
  + \sum_{k=Y}^{\NUMCAND-1} \left( 1 - \ScF[\NUMCAND]{Y} \right)
\\ & = (\NUMCAND-m) \cdot \left( 1 - \ScF[\NUMCAND]{Y} \right)
\\ & \geq z \cdot (\NUMCAND-1) \cdot \left( 1 - \ScF[\NUMCAND]{Y} \right)
\\ & \geq \frac{z}{1-z} \cdot S_2.
\end{align*}

Combining the preceding inequalities, we now obtain that
\[
\frac{1-y}{2y} \cdot S_1
\; \leq \; \frac{1-z}{z} (S_1 + S_3) + S_3
\; = \; \frac{1}{z} \cdot S_3 + \frac{1-z}{z} \cdot S_1.
\]
Solving for $S_1$, and using that the definition of $z_0$ ensures
$1-z \leq \frac{1-y}{6}$,
we now bound
\begin{align*}
S_1
& \; \leq \; \frac{2y}{z(1-y)-2y(1-z)} \cdot S_3
\; \leq \; \frac{2y}{4(1-z)} \cdot S_3
\; \leq \; \frac{z}{2(1-z)} \cdot S_3,
\end{align*}
completing the proof.
\end{extraproof}


\subsection{Necessity} \label{sec:necessity}

Next, we prove that the condition in
Theorem~\ref{thm:main-generalized-borda}
is also necessary for constant distortion.
We assume that the condition~\eqref{eqn:limit-condition}
does not hold, i.e., for every $y \in (0,1)$ and \NUMCANDZ,
there exists an $\NUMCAND \geq \NUMCANDZ$ such that
\begin{align}
& y \cdot \sum_{k=0}^{\Ceiling{y \cdot (\NUMCAND-1)}-1} \left( \ScF[\NUMCAND]{k} - \ScF[\NUMCAND]{\Ceiling{y \cdot (\NUMCAND-1)}} \right)
\; \leq \; (1-y) \cdot \sum_{k=\NUMCAND - \Ceiling{y \cdot (\NUMCAND-1)}}^{\NUMCAND-1} \left( 1 - \ScF[\NUMCAND]{k} \right).
\label{eqn:limit-violation}
\end{align}
We will show that the distortion of \VS is not bounded by any constant.

%
%

The high-level idea of the construction is as follows:
we define two tightly knit clusters \OC and \FC that are far away from each other.
\OC contains a large \OCM fraction of the population,
and thus should in an optimal solution be the one that the winner is
chosen from.
We will ensure that with probability at least
\half, the winner instead comes from \FC.
Because \FC is far from \OC, most of the population then is far from
the chosen candidate, giving much worse cost than optimal.

The metrics underlying \OC and \FC are as follows:
\FC will essentially provide an ``ordering,'' meaning that whichever
set of candidates is drawn from \FC, all voters in \FC
(and essentially all in \OC) agree on their ordering of the candidates.
This will ensure that one candidate from \FC will get a sufficiently
large fraction of first-place votes,
and will be ranked highly enough by voters from \OC, too.
\OC will be based on a large number \OCS of discrete locations \VP.
Their pairwise distances are chosen i.i.d.: as a result,
the rankings of voters are uniformly random, and there is no consensus
among voters in \OC on which of their candidates they prefer.
Because the vote is thus split, the best candidate from \FC will win
instead.

The following parameters (whose values are chosen with foresight) will
be used to define the metric space.
\begin{itemize}
\item Let $\APX > 1$ be any constant;
we will construct a metric space and number of candidates for
which the distortion is at least \APX.
\item Let $\FCM \in (0, \half)$ solve the quadratic
equation $\frac{2\FCM+1}{3\FCM} \cdot (1-\FCM) = 2 \APX - 1$.
A solution exists because at $\FCM = \half$,
the left-hand side is $\frac{2}{3} < 2\APX-1$;
it goes to infinity as $\FCM \to 0$,
while the right-hand side is a positive constant.
\FCM is the fraction of voters in the small cluster \FC.
\item Let $\OCM = 1-\FCM$ denote the fraction of voters in the large
  cluster \OC.
\item Let $\SCALE = \frac{1+\FCM}{\FCM}$ be the distance between
the clusters \FC and \OC.
(Each cluster will have diameter at most $2$.)
\item Let $\OCMU \geq \half + \frac{\OCM}{2} > \OCM$
satisfy $4\OCMU \cdot (1-\OCMU) < \OCM \cdot (1-\OCM)$;
such an \OCMU exists because the left-hand side goes to 0 as
$\OCMU \to 1$.
$\OCMU < 1$ is a high-probability upper bound on the fraction of
candidates that will be drawn from \OC.
\item Let $\NUMCANDZ = \frac{4}{\FCM^2} > 16$;
this is a lower bound on the number of candidates that ensures that
the actual fraction of candidates drawn from \OC is at most \OCMU with
sufficiently high probability.
\item Let $\NUMCAND \geq \NUMCANDZ$ be the \NUMCAND whose existence
  is guaranteed by the assumption \eqref{eqn:limit-violation}
  (for $y = \OCMU$ and \NUMCANDZ).
\item Let $\OCS = \NUMCAND^3$; this is the number of discrete
locations \VP we construct within the larger cluster \OC.
\end{itemize}

We now formally define the metric space consisting of two clusters:

\begin{definition} \label{def:two-cluster-metric}
The metric space consists of two clusters \OC and \FC.
\OC has \OCS discrete locations,
and \VPROB has a point mass of $\frac{\OCM}{\OCS}$ on each such location.
The total probability mass on \FC is
$\VProb{\FC} = 1-\OCM$,
distributed uniformly over the interval $[1,2]$.
Locations in \FC are identified by $\FCP \in [1,2]$.
The distances are defined as follows:

\begin{enumerate}
\item For each distinct pair $\VP, \VPP \in \OC$,
  the distance \Dist{\VP}{\VPP} is drawn independently uniformly at
  random from $[1,2]$.
\item For each distinct pair $\FCP, \FCPP \in \FC$ of locations,
  the distance is defined to be $\Dist{\FCP}{\FCPP} := \min(\FCP,\FCPP)$.
\item Partition $\FC = [1,2]$ into $\OCS!$ disjoint intervals
  $I_{\pi}$ of length $1/\OCS!$ each,
  one for each permutation of the \OCS locations in \OC.
  For $\VP \in \OC$ and $\FCP \in I_{\pi}$,
  let $\pi^{-1}(\VP)$ be the position of \VP in $\pi$,
  and define the distance between \VP and \FCP to be
  $\Dist{\VP}{\FCP} = \SCALE + \frac{\FCP}{4} + \frac{\pi^{-1}(\VP)}{\OCS!}$.
\end{enumerate}
\end{definition}

\begin{proposition} \label{prop:metric}
Definition~\ref{def:two-cluster-metric} defines a metric.
\end{proposition}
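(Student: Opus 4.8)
The goal is to verify the three defining properties of a metric on the disjoint union $\OC \cup \FC$: nonnegativity/identity of indiscernibles, symmetry, and the triangle inequality. Nonnegativity and the fact that $\Dist{\VP}{\VPP}=0$ exactly when $\VP=\VPP$ are immediate from the case analysis in Definition~\ref{def:two-cluster-metric}: within \OC distances lie in $[1,2]$, within \FC we have $\Dist{\FCP}{\FCPP}=\min(\FCP,\FCPP)\geq 1$ for distinct points (and $0$ when $\FCP=\FCPP$), and cross-cluster distances are at least $\SCALE>0$. Symmetry is likewise visible term by term: $\min(\FCP,\FCPP)$ is symmetric, the uniform random draws in \OC are assigned symmetrically to unordered pairs, and the cross-cluster formula does not privilege either endpoint since only one point lies in \OC and one in \FC. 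So the real content is the triangle inequality, which I would check by exhausting the possible locations of the three points among the two clusters.

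\textbf{Triangle inequality, case by case.} For a triple $\VP,\VPP,\VPP[']$ I would split on how many lie in \OC versus \FC. (i) All three in \OC: every pairwise distance is in $[1,2]$, so the sum of the two shorter ones is at least $2\geq$ the longest. (ii) All three in \FC, say at locations $x\leq y\leq z$: the three distances are $\min(x,y)=x$, $\min(y,z)=y$, $\min(x,z)=x$; one checks $x+y\geq x$, $x+x\geq y$ (since $y\leq 2\leq 2x$... actually $y\leq z$ and $x\geq 1$, need $2x\geq y$, which holds as $x\geq 1$ and $y\leq 2$), and $x+y\geq x$ trivially; so all three triangle inequalities hold. (iii) Two in \OC, one in \FC: the two \OC points are at distance $\leq 2$, and each is at distance $\SCALE+\frac{\FCP}{4}+\frac{\pi^{-1}(\cdot)}{\OCS!}$ from the \FC point, so the two cross distances differ by at most $\frac{\OCS!-1}{\OCS!}<1<2$, hence the \OC–\OC edge (which is $\geq 1$) plus either cross edge dominates the other cross edge, and the two cross edges (each $\geq \SCALE\geq 2$) sum to at least the \OC–\OC edge. (iv) Two in \FC, one in \OC: here I need $|\Dist{\VP}{\FCP}-\Dist{\VP}{\FCPP}|\leq \Dist{\FCP}{\FCPP}$ and $\Dist{\VP}{\FCP}+\Dist{\VP}{\FCPP}\geq\Dist{\FCP}{\FCPP}$; the latter is trivial since each term is $\geq\SCALE$ while $\Dist{\FCP}{\FCPP}\leq 2$, and for the former I'd bound $|\Dist{\VP}{\FCP}-\Dist{\VP}{\FCPP}|$ by $\frac{|\FCP-\FCPP|}{4}+\frac{|\pi_{\FCP}^{-1}(\VP)-\pi_{\FCPP}^{-1}(\VP)|}{\OCS!}$ and compare with $\Dist{\FCP}{\FCPP}=\min(\FCP,\FCPP)\geq 1$.

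\textbf{The main obstacle.} The delicate case is (iv), because \FC carries \emph{two} sources of variation in the cross-cluster distance as $\FCP$ moves: the smoothly varying term $\FCP/4$ and the term $\pi^{-1}(\VP)/\OCS!$, which jumps whenever $\FCP$ crosses an interval boundary $I_\pi \to I_{\pi'}$. When $\FCP$ and $\FCPP$ lie in the same subinterval $I_\pi$, only $\FCP/4$ varies and $|\Dist{\VP}{\FCP}-\Dist{\VP}{\FCPP}|=|\FCP-\FCPP|/4\leq |\FCP-\FCPP|/4$, comfortably below $\min(\FCP,\FCPP)\geq 1$ as long as $|\FCP-\FCPP|\leq 2$. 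When they lie in different subintervals, the permutation term contributes at most $\frac{\OCS!-1}{\OCS!}<1$, but then $|\FCP-\FCPP|\geq$ could be tiny, so I must be careful: the bound I get is $|\Dist{\VP}{\FCP}-\Dist{\VP}{\FCPP}|\leq \frac{|\FCP-\FCPP|}{4}+1$, and I need this to be $\leq \min(\FCP,\FCPP)$. Since $\min(\FCP,\FCPP)\geq 1$ this is \emph{not} immediately enough, so the actual definition must be exploited more carefully — I suspect the intended reading is $\Dist{\FCP}{\FCPP}=\min(\FCP,\FCPP)$ which is always $\geq 1$, and one uses that $\Dist{\VP}{\FCP}=\SCALE+\FCP/4+(\text{something in }[0,1))$ together with $\SCALE$ being large to absorb slack, or that the relevant inequality for metricity is only the weaker $\Dist{\VP}{\FCP}\leq \Dist{\VP}{\FCPP}+\Dist{\FCPP}{\FCP}$, i.e. $\SCALE+\FCP/4+p_\pi \leq \SCALE+\FCPP/4+p_{\pi'}+\min(\FCP,\FCPP)$, equivalently $\FCP/4 - \FCPP/4 + p_\pi - p_{\pi'} \leq \min(\FCP,\FCPP)$; here $\FCP/4-\FCPP/4\leq 1/4$, $p_\pi-p_{\pi'}<1$, and $\min(\FCP,\FCPP)\geq 1 > 5/4$ fails by a hair — so the correct computation shows $p_\pi - p_{\pi'} < 1$ combined with $\FCP/4-\FCPP/4 \le 1/4$ gives $< 5/4$, and one instead notes $\min(\FCP,\FCPP)\ge 1$ is replaced by the sharper observation that whenever $\pi\neq\pi'$ the points are in distinct intervals so in fact we can also use $\Dist{\FCP}{\FCPP} = \min(\FCP,\FCPP)$ and that the offending configuration has $\FCP,\FCPP$ both close to $1$... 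I would resolve this by writing the inequality as $\frac{\FCP-\FCPP}{4} + (p_\pi - p_{\pi'}) \le \min(\FCP,\FCPP)$ and checking that the left side is $< 1/4 + 1 = 5/4$ is too weak, hence the real argument must use that $p_\pi, p_{\pi'} \in [0,1)$ are \emph{both} present and the difference, not just one of them, is what matters, together with $\FCP/4\le 1/2$. Carefully bookkeeping these bounded contributions and confirming they stay below $\min(\FCP,\FCPP)$ is exactly the place where the specific constants ($\SCALE=(1+\FCM)/\FCM$, the $1/4$ coefficient, the $1/\OCS!$ granularity) were chosen with foresight, and that is the one step I would write out in full detail rather than wave at.
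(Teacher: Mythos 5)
Your verification is incomplete at exactly the point you flag, and the obstacle you describe is an artifact of misreading the definition rather than a real difficulty. In Definition~\ref{def:two-cluster-metric}, $\pi^{-1}(\VP)$ is the \emph{position} of the location $\VP$ among the \OCS locations of \OC in the permutation $\pi$, so $\pi^{-1}(\VP) \leq \OCS$ and the term $\pi^{-1}(\VP)/\OCS!$ is at most $\OCS/\OCS! = 1/(\OCS-1)! \leq \tfrac12$; it is nowhere near $1$, and your bound $(\OCS!-1)/\OCS!$ treats the position as if it could be as large as $\OCS!$. Since also $\FCP/4 \leq \tfrac12$, every cross-cluster distance lies in $[\SCALE,\SCALE+1]$. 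With this, your problematic case (iv) closes immediately: $\Dist{\VP}{\FCP} \leq \SCALE+1 \leq \Dist{\VP}{\FCPP} + \Dist{\FCPP}{\FCP}$, because the first term on the right is at least \SCALE and the second, being a within-\FC distance, is at least $1$; no bookkeeping of $\FCP/4-\FCPP/4$ or of $p_\pi - p_{\pi'}$ is needed at all. As written, however, your argument stops at the observation that your bound of $5/4$ is too weak and then speculates about what ``the real argument must use'' without resolving it, so the triangle inequality --- the only nontrivial content of the proposition --- is not actually established.

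For comparison, the paper's proof is precisely this uniform two-line observation: within-cluster distances lie in $[1,2]$ and cross-cluster distances in $[\SCALE,\SCALE+1]$ with $\SCALE = (1+\FCM)/\FCM > 3$. Hence for a within-cluster pair the direct distance is at most $2$, while any two-hop path costs at least $2$ (two within-cluster hops) or at least $2\SCALE$ (two cross hops); and for a cross-cluster pair the direct distance is at most $\SCALE+1$, while any two-hop path necessarily uses one cross hop ($\geq \SCALE$) and one within-cluster hop ($\geq 1$). Your cases (i)--(iii) are fine (in (iii) your bound on the difference of the two cross edges is cruder than necessary but still below $1$, which suffices); replace the speculation in (iv) by the computation above and the proof is complete.
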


\begin{proof}
Non-negativity, symmetry, and indiscernibles hold by definition.
Because all distances within clusters are in $[1,2]$, and distances
across clusters are more than 2, the triangle inequality holds for
all pairs $\VP, \VPP \in \OC$ and all pairs $\FCP, \FCPP \in \FC$.

Because $\Dist{\VP}{\FCP} \in [\SCALE, \SCALE+1]$ for all $\VP \in \OC$
and $\FCP \in \FC$,
and distances within \OC or \FC are at least 1,
there can be no shorter path than the direct one
between any $\VP \in \OC$ and $\FCP \in \FC$.
Therefore, the triangle inequality is satisfied.
\end{proof}

Now consider a (random) set \CS of 
\NUMCAND candidates, drawn i.i.d.~from \VPROB.
We are interested in the event that the resulting slate of candidates
is highly representative of the voters, in the following sense.

\begin{definition} \label{def:representative-candidates}
Let \CS be the (random) set of \NUMCAND candidates drawn from \VPROB.
Let \Event be defined as the conjunction of the following:
\begin{enumerate}
\item For each location $\VP \in \OC$,
  the set \CS contains at most one candidate from \VP.
\item At least a $\frac{\FCM}{2}$ fraction of candidates in \CS
  is drawn from \FC (and thus at most an \OCMU fraction of
    candidates are from \OC).
\item At least an $\frac{\OCM}{2}$ fraction of candidates in \CS
  is drawn from \OC.
\item No pair $\FCP, \FCPP \in \FC \cap \CS$ has $\Abs{\FCP - \FCPP} < \frac{1}{(\OCS-1)!}$.
\end{enumerate}
\end{definition}

Lemma~\ref{lem:event-high-probability} uses standard tail bounds to
show that \Event happens with probability at least \half;
then, Lemma~\ref{lem:event-wrong-winner} shows that whenever \Event happens,
the winner is from \FC. 

\begin{lemma} \label{lem:event-high-probability}
\Event happens with probability at least \half.
\end{lemma}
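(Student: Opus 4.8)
The plan is to show that each of the four events in Definition~\ref{def:representative-candidates} fails with small probability, and then conclude by a union bound that \Event holds with probability at least \half. I would set up the target so that each of the four conditions is made to fail with probability at most $\frac18$ (or some similar budget summing to at most \half), using the specific parameter choices $\OCS = \NUMCAND^3$, $\NUMCANDZ = 4/\FCM^2$, and the definition of \OCMU.

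First, for Condition~1 (no location in \OC receives two candidates): this is a birthday-type bound. There are \OCS locations in \OC, each with mass $\OCM/\OCS$, and \NUMCAND candidates are drawn; the probability that some location is hit twice is at most $\binom{\NUMCAND}{2} \cdot (\OCM/\OCS)^2 \le \NUMCAND^2 / (2\OCS) = \NUMCAND^2/(2\NUMCAND^3) = 1/(2\NUMCAND)$, which is tiny since $\NUMCAND \ge \NUMCANDZ > 16$. Second, Conditions~2 and~3 concern the split of candidates between \OC and \FC: the expected fraction from \OC is $\OCM$ and from \FC is $\FCM$. These are sums of i.i.d.\ indicators, so standard Chernoff/Hoeffding bounds give that the fraction from \OC exceeds $\OCMU$ (hence fraction from \FC falls below $\FCM/2$, since $\OCMU < 1$ and $1-\OCMU \ge \FCM/2$ by the choice $\OCMU \ge \half + \OCM/2$) with probability at most $\exp(-\Omega(\FCM^2 \NUMCAND))$, and similarly the fraction from \OC falls below $\OCM/2$ with probability at most $\exp(-\Omega(\OCM^2 \NUMCAND))$; both are at most $\frac18$ once $\NUMCAND \ge \NUMCANDZ = 4/\FCM^2$ is large enough (the constant in \NUMCANDZ is chosen precisely to control the dominant $\FCM^2$ term). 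Third, for Condition~4 (no two candidates in $\FC \cap \CS$ are within $1/(\OCS-1)!$ of each other): conditioned on there being at most \NUMCAND candidates in \FC, each is uniform on $[1,2]$, and the probability that some pair is within distance $\delta := 1/(\OCS-1)!$ is at most $\binom{\NUMCAND}{2} \cdot 2\delta \le \NUMCAND^2 / (\OCS-1)!$, which is astronomically small since $(\OCS-1)! = (\NUMCAND^3-1)!$ dwarfs $\NUMCAND^2$.

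The main obstacle — to the extent there is one — is bookkeeping: making sure the parameter choices actually deliver the claimed constant $\half$, in particular checking that the two Chernoff bounds for Conditions~2 and~3 are genuinely below the allotted budget for \emph{every} $\NUMCAND \ge \NUMCANDZ$, not just asymptotically. This amounts to verifying that $4/\FCM^2$ is a large enough threshold; since the deviation in Condition~2 is a constant multiple of $\FCM$ (namely $\OCMU - \OCM \ge \FCM/2$ or using the stronger condition $4\OCMU(1-\OCMU) < \OCM(1-\OCM)$) and the deviation in Condition~3 is $\OCM/2 \ge \half$, the relevant Chernoff exponents are $\Omega(\FCM^2 \NUMCAND) \ge \Omega(\FCM^2 \cdot 4/\FCM^2) = \Omega(1)$, and one picks the implied constants so this exceeds $\ln 8$. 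Everything else is routine. Summing the four failure probabilities gives at most $\tfrac18 + \tfrac18 + \tfrac18 + (\text{negligible}) < \half$, hence $\Prob{\Event} \ge \half$.
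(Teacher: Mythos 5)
Your proposal is correct and follows essentially the same route as the paper's proof: bound the failure probability of each of the four sub-events separately (a birthday-type union bound over pairs for collisions at locations of \OC, a Hoeffding bound for each of the two fraction conditions with deviation $\FCM/2$ resp.\ $\OCM/2$, and a union bound over short intervals for the spacing condition in \FC), then take a union bound. The only differences are bookkeeping: the paper is content with $\e^{-2}$ (not $1/8$) for each Hoeffding term, since at the threshold $\NUMCAND = \NUMCANDZ = 4/\FCM^2$ the exponent is exactly $2 < \ln 8$, and with the two remaining terms of order $1/\NUMCAND$ the total is still below $\half$ because $\NUMCAND \geq \NUMCANDZ > 16$; also note that for Condition~2 the event to bound is that the fraction from \FC drops below $\FCM/2$ (the statement about \OCMU is a consequence, not the event itself), which is exactly the deviation-$\FCM/2$ Hoeffding application you already invoke.
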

\begin{proof}
We upper-bound the probability of the complement of each of the four
constituent sub-events.
\begin{enumerate}
\item For each of the at most $\NUMCAND^2$ pairs of candidates,
the probability that they are both drawn from the same location is
at most $\OCM/\OCS \leq 1/\NUMCAND^3$.
By a union bound over all pairs, the probability that any location has
at least two pairs is at most $1/\NUMCAND$.

\item Let the random variable $X$ be the number of candidates drawn
from \FC.
Then, $\Expect{X} = \FCM \cdot \NUMCAND$,
and $X$ is a sum of i.i.d.~Bernoulli random variables.
By the Hoeffding bound 
$\Prob{X < (\FCM - \epsilon) \NUMCAND} \leq \exp(-2 \epsilon^2 \NUMCAND)$,
with $\epsilon = \FCM/2$, we obtain that the fraction of
candidates from \FC is too small with probability at most
$\exp(\frac{-\FCM^2}{2} \cdot \NUMCAND)
\leq \exp(\frac{-\FCM^2}{2} \cdot \NUMCANDZ) = \frac{1}{\e^2}$.

\item The proof is essentially identical to the previous case (except
because $\OCM \geq \FCM$, the bounds are even stronger),
so this event happens with probability at least $\frac{1}{\e^2}$ as
well.

\item
Consider all intervals of $[1,2]$ of length $\frac{2}{(\OCS-1)!}$,
starting at $1+\frac{k}{(\OCS-1)!}$ for some 
$k = 0, 1, \ldots, (\OCS-1)!-2$.
If $\FCP,\FCPP$ with $\Abs{\FCP-\FCPP} \leq \frac{1}{(\OCS-1)!}$ existed,
they would both be contained in at least one such interval
(because the interval length is twice as long as the distance).

For any of the $(\OCS-1)!-1$ intervals $I$, the probability that a specific
pair of candidates is drawn from $I$ is at most
$\frac{4}{((\OCS-1)!)^2}$.
By a union bound over all (at most $\NUMCAND^2$) pairs of
candidates and all intervals,
the probability that any pair is drawn from  any interval $I$ is at
most $\frac{4\NUMCAND^2}{(\OCS-1)!} \leq \frac{1}{\NUMCAND}$.
\end{enumerate}
Because $\NUMCAND \geq 9$, a union bound shows that \Event happens
with probability at least $\half$.
\end{proof}

\begin{lemma} \label{lem:event-wrong-winner}
Whenever \Event happens, the winning candidate is from \FC.
\end{lemma}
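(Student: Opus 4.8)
The plan is to show that, conditioned on \Event, some candidate from \FC accumulates more total score than any candidate from \OC. First I would analyze the scores of candidates from \FC. By property (2) of \Event, at least $\frac{\FCM}{2}\NUMCAND$ candidates lie in \FC, and by property (4) they are all more than $\frac{1}{(\OCS-1)!}$ apart, so they fall into distinct sub-intervals $I_\pi$; in particular, all \FC-voters and (because the across-cluster distances are dominated by the $\frac{\pi^{-1}(\VP)}{\OCS!}$ and $\frac{\FCP}{4}$ terms) essentially all \OC-voters rank the \FC-candidates in the \emph{same} order, namely by increasing $\FCP$. Thus the smallest-index \FC-candidate, call him $i^\star$, is ranked first among all \FC-candidates by every voter. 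Since every \OC-candidate is at distance $>2$ from every \FC-voter while \FC-candidates are within \FC at distance $\le 2$, each \FC-voter ranks $i^\star$ strictly ahead of all of \OC, so $i^\star$ sits in position at most $(\text{number of \FC-candidates})-1 \le \NUMCAND - \OCM\NUMCAND/2$ on every \FC-ballot — more simply, ahead of all $\ge \frac{\OCM}{2}\NUMCAND$ \OC-candidates. This lets me lower-bound $\Score{i^\star}$ by the $\FCM$-mass of \FC-voters times $\ScF[\NUMCAND]{\,\cdot\,}$ evaluated in the top portion, plus the contribution from \OC-voters.

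Next I would upper-bound the score of an arbitrary \OC-candidate $i$. Here property (1) of \Event (at most one candidate per location in \OC) together with the i.i.d.\ uniform random distances within \OC is what makes the \OC-votes ``split'': from the viewpoint of any \OC-voter \VP, the relative order of the \OC-candidates is a uniformly random permutation, so in expectation $i$ receives the \emph{average} score $\frac{1}{\SBC}\sum_{k} \ScF[\NUMCAND]{k}$ over the positions occupied by \OC-candidates, where $\SBC \ge \frac{\OCM}{2}\NUMCAND$ is their count; meanwhile \FC-voters give $i$ at most $\ScF[\NUMCAND]{|\FC\cap\CS|}$ each since all \FC-candidates precede all of \OC on \FC-ballots. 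Because $\OCS = \NUMCAND^3$ is huge, I would use a concentration/averaging argument over the \OCS locations (each carrying mass $\OCM/\OCS$) to replace ``expected score'' by ``actual score up to lower-order terms'': the per-location contributions are an average of $\OCS$ bounded independent quantities, so the total \OC-vote to $i$ is within $o(1)$ of its mean with probability $1-o(1)$, and a union bound over the $\le \NUMCAND$ candidates in \CS keeps this inside \Event (or is absorbed by redefining \Event — but the four stated sub-events already force the combinatorial structure, and the averaging is deterministic once distances within \OC are drawn). Actually, the cleaner route, matching the structure of Lemma~\ref{lem:event-implies-winner}, is to argue directly that \emph{no matter how} the \OC-candidates are ordered by each voter, the \emph{best} \OC-candidate cannot exceed the average-plus-slack, and that $i^\star$ beats this average; this is exactly where inequality~\eqref{eqn:limit-violation} enters, now used in the opposite direction from the sufficiency proof.

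Concretely, I would set $y = \OCMU$ (as fixed in the parameter list) and invoke \eqref{eqn:limit-violation} for the chosen \NUMCAND to get
\[
\OCMU \cdot \sum_{k=0}^{\Ceiling{\OCMU(\NUMCAND-1)}-1}\!\!\left(\ScF[\NUMCAND]{k} - \ScF[\NUMCAND]{\Ceiling{\OCMU(\NUMCAND-1)}}\right)
\;\le\; (1-\OCMU)\cdot \sum_{k=\NUMCAND-\Ceiling{\OCMU(\NUMCAND-1)}}^{\NUMCAND-1}\!\!\left(1-\ScF[\NUMCAND]{k}\right),
\]
and translate this into: the \OC-mass advantage that any \OC-candidate could hope to build over $i^\star$ among \OC-voters (left side, scaled by $\OCM$) is outweighed by the \FC-mass advantage $i^\star$ builds among \FC-voters plus the deficit \OC-candidates suffer on \FC-ballots (right side, scaled by $\FCM$), using $\FCM = 1-\OCM$ and the fact that the \OC-candidate count $\SBC \le \OCMU\NUMCAND$ by property (2) while $\SBC \ge \frac{\OCM}{2}\NUMCAND$ by property (3). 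The arithmetic is the mirror image of Lemma~\ref{lem:event-implies-winner}, with the roles of the two clusters swapped and the inequality reversed. I expect the \textbf{main obstacle} to be bookkeeping the contribution of \OC-voters to $i^\star$'s score and of \FC-voters to $i$'s score \emph{with the correct constants} so that the $\OCMU$ slack (rather than $\OCM$) in \eqref{eqn:limit-violation} still suffices — that is, verifying that replacing the true \OC-candidate fraction by its upper bound \OCMU and absorbing the $\frac{\FCP}{4}+\frac{\pi^{-1}(\VP)}{\OCS!}$ perturbation terms (which is why $\SCALE = \frac{1+\FCM}{\FCM}$ and $\OCS = \NUMCAND^3$ were chosen) does not break the strict inequality. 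Everything else — the ordering of \FC-candidates, the position of $i^\star$, the ``vote-splitting'' among \OC-candidates — is a direct consequence of Definition~\ref{def:two-cluster-metric} and the four clauses of \Event.
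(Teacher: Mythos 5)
Your overall skeleton matches the paper's proof: take $\hat\imath$ to be the \FC-candidate with smallest \FCP, note that he is ranked first by all \FC-voters and ahead of every other \FC-candidate by all \OC-voters (this is where sub-event (4) and the $\frac{\FCP}{4}+\frac{\pi^{-1}(\VP)}{\OCS!}$ terms enter), bound the advantage any $i\in\OCC$ can gain from \OC-voters, and close with \eqref{eqn:limit-violation} at $y=\OCMU$ together with $\OCMC\le\OCMU\NUMCAND$, $\OCMC\ge\frac{\OCM}{2}\NUMCAND$ and the defining inequality $4\OCMU(1-\OCMU)<\OCM(1-\OCM)$. Your worry about the constant bookkeeping is exactly where the paper spends its final chain of inequalities.

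However, there is one genuine gap: the route you declare ``cleaner'' --- arguing \emph{deterministically} that ``no matter how the \OC-candidates are ordered by each voter, the best \OC-candidate cannot exceed the average-plus-slack'' --- is false, and the analogy with Lemma~\ref{lem:event-implies-winner} does not transfer. In the sufficiency proof one compares a single outside candidate against the \emph{average} candidate of \SB, and wins because the best insider is at least the average; here the roles are reversed: $\hat\imath$ must beat the \emph{best} candidate of \OCC, and with adversarial orderings the best can be far above the average (e.g., if all \OC-voters happened to agree on a favorite \OC-candidate, that candidate would collect nearly $\OCM\cdot 1$ from \OC-voters alone and would beat $\hat\imath$, whose \OC-score is only about $\OCM\cdot\ScF[\NUMCAND]{\OCMC}$). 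So the vote-splitting is not a consequence of the four clauses of \Event plus ``arbitrary orderings''; it requires the randomness of the within-\OC distances. Your first route is the correct (and the paper's) one and must be committed to: each location's ranking of \OCC is uniformly random and independent across locations, so a Hoeffding bound over the $\OCS=\NUMCAND^3$ locations, with a union bound over the at most \NUMCAND candidates and \OCMC positions, shows that with high probability no \OC-candidate is placed in any given position by more than twice the expected $\frac{1}{\OCMC}$ fraction of \OC-voters; this factor-2 slack is then absorbed by the slack built into \OCMU. This concentration step is the heart of the argument (and the reason \OCS is taken as large as $\NUMCAND^3$), not an optional refinement, and the conclusion holds for all but a low-probability set of draws of the within-\OC distances, a caveat the paper likewise handles inside the proof rather than in the statement of \Event.
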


\begin{proof}
Let \FCMC be the actual number of candidates drawn from \FC,
and $\OCMC = \NUMCAND - \FCMC$ the number of candidates drawn from \OC.
Because we assumed that \Event happened,
$\FCMC \geq \frac{\FCM}{2} \cdot \NUMCAND$
and $\OCMC \leq \OCMU \cdot \NUMCAND$.
Let \OCC be the set of candidates drawn from \OC.
Under \Event, \OCC contains at most one candidate from each location
$\VP \in \OC$.
As a result, because the random distances within \OC are distinct with
probability $1$, 
there will be no ties in the rankings of any voters.

Let $\hat{\imath}$ be the candidate from \FC with smallest value \FCPH.
With probability $1$, the \FCP value of $\hat{\imath}$ is unique.
Consider some arbitrary candidate $i \in \OCC$ from location \VPP.
We calculate the contributions to $\hat{\imath}$ and $i$
from voters in \FC and in \OC separately,
and show that $\hat{\imath}$ beats $i$.
Because this holds for arbitrary $i$,
the candidate $\hat{\imath}$ or another candidate from $\FC$ wins.

\begin{enumerate}
\item We begin with points given out by voters in \FC.
By definition of the distances within \FC,
$\hat{\imath}$ is ranked first by all voters in \FC.

Voters in $I_{\pi}$ rank the candidates from \OC according to their
order in $\pi$.
For each ordering of \OCC, exactly a $\frac{1}{\OCMC!}$ fraction of
permutations induces that ordering.
In particular, for each $k \in {1, \ldots, \OCMC}$,
exactly a $1/\OCMC$ fraction of voters
places $i$ in position $k + \FCMC$.
Thus, $i$ obtains a total of
$(1-\OCM) \cdot \sum_{k=\NUMCAND-\OCMC}^{\NUMCAND-1} \frac{1}{\OCMC} \cdot \ScF[\NUMCAND]{k}$
points from voters in \FC.
Overall, $\hat{\imath}$ obtains an advantage of at least
\begin{align*} 
\Delta_{\FC}
& \; = \; (1-\OCM) \cdot \left(\ScF[\NUMCAND]{0} -
   \frac{1}{\OCMC} \cdot \sum_{k=\NUMCAND-\OCMC}^{\NUMCAND-1} \ScF[\NUMCAND]{k}
   \right)
 \; = \; (1-\OCM) \cdot \frac{1}{\OCMC} \cdot 
     \sum_{k=\NUMCAND-\OCMC}^{\NUMCAND-1} (1-\ScF[\NUMCAND]{k}).
\end{align*}
   
\item Next, we analyze the number of points given out by voters in \OC.
The distance from any voter location $\VP \in \OC$ to $\hat{\imath}$ is
at most $\SCALE + \frac{\FCPH}{4} + \frac{\NUMCAND}{\OCS!}$.
Under \Event, no other candidate from \FC can be at a location
$\FCP \leq \FCPH + \frac{1}{(\OCS-1)!}$;
therefore, the distance from any voter location $\VP \in \OC$
to any other candidate $\FCP \in \FC$ is at least
\[
  \Dist{\VP}{\FCP}
  \; \geq \; \SCALE + \frac{\FCP}{4}
  \; \geq \; \SCALE + \frac{\FCPH}{4} + \frac{1}{4(\OCS-1)!}
  \; > \; \SCALE + \frac{\FCPH}{4} + \frac{\NUMCAND}{\OCS!}
  \; \geq \; \Dist{\VP}{\hat{\imath}},
\]

so all voters in \OC prefer $\hat{\imath}$ over any other candidate from \FC.
Hence, $\hat{\imath}$ obtains at least
$\OCM \cdot \ScF[\NUMCAND]{\OCMC}$
points combined from voters in \OC.

To analyze the votes from voters in \OC for candidates from \OC,
we first notice that \Event and the draw of candidates
are independent of the distances within \OC.
Hence, even conditioned on \Event,
the distances \Dist{\VP}{\VPP} between locations in
\OC are i.i.d.~uniform from $[1,2]$.
In particular, each location $\VP \in \OC$ ranks the candidates in
\OCC in uniformly random order.
Furthermore, for two locations $\VP \neq \VPP$,
the rankings of \OCC are independent;
the reason is that they are based on disjoint vectors of distances
$(\Dist{\VP}{i})_{i \in \OCC}, (\Dist{\VPP}{i})_{i \in \OCC}$.
We use this independence to apply tail bounds.
Let \VPP be the location of $i$.
Voters rank $i$ as follows:

\begin{itemize}
\item Among locations \VP without a candidate of their own,
  in expectation,
  a $1/\OCMC$ fraction of voters will rank $i$ in position $k$,
  for each $k = 0, \ldots, \OCMC-1$.
\item Among the $\OCMC-1$ locations $\VP \neq \VPP$
  with a candidate of their own, in expectation,
  a $1/(\OCMC-1)$ fraction of voters will rank $i$ in position $k$,
  for each $k = 1, \ldots, \OCMC -1$.
\item Voters at \VPP will rank $i$ in position 0.
\end{itemize}

For each $k$, let the random variable $X_k$ be the number of locations
that rank $i$ in position $k$.
By the preceding arguments, $\Expect{X_k} = \frac{\OCS}{\OCMC}$,
and $X_k$ is a sum of \OCS independent (not i.i.d.)
Bernoulli random variables.
Hence, by the Hoeffding bound,
the probability that more than a $\frac{2}{\OCMC}$
fraction of voters rank $i$ in position $k$
is at most $2\exp(-2 \cdot \frac{1}{\OCMC^2} \cdot \OCS) \leq 2\exp(-\NUMCAND)$.
By a union bound over all candidates $i \in \OCC$ and all values
$k = 0, \ldots, \OCMC - 1$,
with high probability, for all $i$ and $k$,
the fraction of voters (in \OC) ranking $i$ in position $k$ is at most
$\OCM \cdot \frac{2}{\OCMC}$.
Because the total fraction of voters in \OC is \OCM,
any excess votes for some (early) positions $k$ must be compensated
by fewer votes for other (late) positions $k'$.
Relaxing the constraint that the number of votes for each position $k$
must be non-negative, we can upper-bound the total points for $i$
by assuming that each of the positions $k=0, \ldots, \OCMC-2$ receives
twice the expected number of votes,
while position $k=\OCMC-1$ receives a negative number of votes
that compensates for the excess votes.
Then, the advantage for $i$ over $\hat{\imath}$ from votes from \OC
is at most
\begin{align*}
\Delta_{\OC}
& \; \eqdef \;
\OCM \cdot \left(
    \sum_{k=0}^{\OCMC-2} \frac{2}{\OCMC} \cdot \ScF[\NUMCAND]{k}
  + \frac{2-\OCMC}{\OCMC} \cdot \ScF[\NUMCAND]{\OCMC-1}
- \ScF[\NUMCAND]{\OCMC} \right)
\\ & \; = \;
\frac{\OCM}{\OCMC} \cdot \bigg( \sum_{k=0}^{\OCMC-2} 
\big( 2 \ScF[\NUMCAND]{k}
     - \ScF[\NUMCAND]{\OCMC-1}
     - \ScF[\NUMCAND]{\OCMC} \big) + \ScF[\NUMCAND]{\OCMC-1} - \ScF[\NUMCAND]{\OCMC} \bigg)
\\ & \; \stackrel{\SCF[\NUMCAND] \text{ monotone}}{\leq} \;
 \frac{2\OCM}{\OCMC} \cdot
 \sum_{k=0}^{\OCMC-1} (\ScF[\NUMCAND]{k} - \ScF[\NUMCAND]{\OCMC}).
\end{align*}
\end{enumerate}

Finally, we can bound

\begin{align*}
\Delta_{\OC} \cdot \OCMC
& \leq
2 \OCM \cdot \sum_{k=0}^{\OCMC-1} 
  (\ScF[\NUMCAND]{k} -  \ScF[\NUMCAND]{\OCMC})
\\ & \stackrel{\OCM \leq \OCMU, \, \SCF[\NUMCAND] \text{ mon.}}{\leq}
2 \OCMU \cdot \sum_{k=0}^{\Ceiling{\OCMU (\NUMCAND-1)}-1} 
  (\ScF[\NUMCAND]{k} -  \ScF[\NUMCAND]{\Ceiling{\OCMU (\NUMCAND-1)}})
\\ & \stackrel{\eqref{eqn:limit-violation}, \text{ Def.~of } \NUMCAND}{\leq}
2 (1-\OCMU) \cdot 
  \sum_{k=\NUMCAND-\Ceiling{\OCMU (\NUMCAND-1)}}^{\NUMCAND-1} (1-\ScF[\NUMCAND]{k})
\\ & \stackrel{\SCF[\NUMCAND] \text{ mon.}}{\leq}
2 (1-\OCMU) \cdot \frac{\OCMU \cdot (\NUMCAND-1)}{\OCMC}
  \sum_{k=\NUMCAND-\OCMC}^{\NUMCAND-1} (1-\ScF[\NUMCAND]{k})
\\ & \stackrel{\OCMC \geq \OCM \cdot \NUMCAND/2}{\leq}
2 (1-\OCMU) \cdot \frac{2\OCMU}{\OCM}
  \sum_{k=\NUMCAND-\OCMC}^{\NUMCAND-1} (1-\ScF[\NUMCAND]{k})
\\ & \stackrel{\text{Def.~of } \OCMU}{<}
(1-\OCM) \cdot \sum_{k=\NUMCAND-\OCMC}^{\NUMCAND-1} (1-\ScF[\NUMCAND]{k})
\\ & = \Delta_{\FC} \cdot \OCMC.
\end{align*}

Thus, $\hat{\imath}$ beats all candidates drawn from \OC,
and the winner will be from \FC.
\end{proof}

Using the preceding lemmas, the proof of
necessity is almost complete.
Consider the metric space with all the parameters as defined above.
By Lemmas~\ref{lem:event-high-probability} and \ref{lem:event-wrong-winner},
with probability at least \half, the winner is from \FC.
The social cost of any candidate from \FC is at least
$\FCM \cdot 0 + (1-\FCM) \cdot (\SCALE+1)$.
On the other hand, the social cost of any candidate from \OC is at
most $(1-\FCM) \cdot 2 + \FCM \cdot (\SCALE+1) = 3$.
The distortion in this case is thus at least
\begin{align*}
\frac{(1-\FCM) \cdot (\SCALE+1)}{3}
& \; = \;
\frac{(2\FCM+1) \cdot (1-\FCM)}{3\FCM}
  \; = \; 2\APX-1.
\end{align*}
In the other case (when \Event does not occur --- this happens with
probability at most \half),
the distortion is at least $1$, so that the expected distortion is at
least $\half (2\APX-1) + \half \cdot 1 = \APX$.

\section{Proof of Corollary~\ref{cor:generalized-borda-constant}}
\label{sec:generalized-borda-constant-proof}

\begin{extraproof}{Corollary~\ref{cor:generalized-borda-constant}}
For the first part of the corollary,
assume that \SCF is not constant on $(0,1)$.
The intuition is that in that case,
the sum on the left-hand side of \eqref{eqn:limit-condition}
(for sufficiently large $y$) will be $\Omega(\NUMCAND)$,
while the sum on the right-hand side is obviously at most \NUMCAND. 
By making $y$ a constant close enough to $1$,
we can dominate the constant from $\Omega$,
and thus ensure that the inequality \eqref{eqn:limit-condition} holds.
Then, the constant distortion follows from
Theorem~\ref{thm:main-generalized-borda}.

More precisely, let $0 < \ScFLB < \ScFUB < 1$ be such that $\ScF{\ScFLB} > \ScF{\ScFUB}$.
Let $\delta \eqdef \ScF{\ScFLB} - \ScF{\ScFUB}$
and $y \eqdef \max(\ScFUB, 1-\frac{\delta \ScFLB}{8}) \in (0,1)$.
Let \NUMCANDZ be such that for all $n \ge n_0$, we have
\begin{align*}
\ScF[\NUMCAND]{\Floor{\ScFLB \cdot (\NUMCAND-1)}} & \geq \ScF{\ScFLB} - \delta/4, &
 \ScF[\NUMCAND]{\Ceiling{\ScFUB \cdot (\NUMCAND-1)}} & \leq \ScF{\ScFUB} + \delta/4,\\
\Floor{\ScFLB \cdot (\NUMCAND-1)} & \geq  \frac{\ScFLB \NUMCAND}{2}, & 
\Ceiling{y \cdot (\NUMCAND -1)} & \leq 2 y \NUMCAND.
\end{align*}
Such an \NUMCANDZ exists by the consistency of \VS and basic integer arithmetic.
Then, for all $\NUMCAND \geq \NUMCANDZ$,
\begin{align*}
& y \cdot \sum_{k=0}^{\Ceiling{y \cdot (\NUMCAND-1)}-1}
  \left(  \ScF[\NUMCAND]{k}
        - \ScF[\NUMCAND]{\Ceiling{y \cdot (\NUMCAND-1)}} \right)
\\ & \geq y \cdot \sum_{k=0}^{\Floor{\ScFLB \cdot (\NUMCAND-1)}-1}
  \left(  \ScF[\NUMCAND]{\Floor{\ScFLB \cdot (\NUMCAND-1)}}
        - \ScF[\NUMCAND]{\Ceiling{\ScFUB \cdot (\NUMCAND-1)}} \right)
\\ & \geq y \cdot \sum_{k=0}^{\Floor{\ScFLB \cdot (\NUMCAND-1)}-1} (\delta/2)
\\ & \geq \frac{1}{4} \cdot y \cdot \ScFLB \cdot \NUMCAND \cdot \delta
\\ & \geq 2 y \cdot (1-y) \cdot \NUMCAND
\\ & > (1-y) \cdot \sum_{k=\NUMCAND-\Ceiling{y \cdot (\NUMCAND-1)}}^{\NUMCAND-1} \left( 1 - \ScF[\NUMCAND]{k} \right).
\end{align*}

Because the condition~\eqref{eqn:limit-condition} is satisfied,
Theorem~\ref{thm:main-generalized-borda} implies constant distortion.

\medskip

For the second part of the corollary,
assume that $\ScF{x} = \SCC < 1$ for all $x \in (0,1)$.
Let $y \in (0,1)$ be arbitrary.
We will show that for sufficiently large \NUMCAND,
the condition~\eqref{eqn:limit-condition} is violated.

The intuition is that the sum on the right-hand side of
\eqref{eqn:limit-condition} consists of terms that will in the
limit be $1-\SCC > 0$,
while the left-hand side is a sum in which each term converges to 0.
Thus, never mind how large the constant $y < 1$ is,
the factors of $y$ and $1-y$ will eventually not be enough to make the
left-hand side larger than the right-hand side.
Making this intuition precise requires some care:
while the functions \SCF[\NUMCAND] converge to \SCF,
we did not assume that they do so \emph{uniformly}. 
To deal with this issue, we will consider consistency with \SCF at two points
$\ScFSm$ and $1-\ScFSm$ only (with $\ScFSm$ being a very small constant),
and use monotonicity of each \SCF[\NUMCAND] to bound the remaining
terms.
The terms of the sum corresponding to points to the left of $\ScFSm$ and to
the right of $1-\ScFSm$ can then not be bounded, but there are few enough
of them that we still obtain the desired inequality.
More specifically, let $\ScFSm \in (0,1)$ be a sufficiently small constant such that $\ScFSm < \min(y,1-y)$ and 
\[
  \delta \; := \;
  \frac{(1-y) \cdot (1-\SCC) - \ScFSm}{1+ 3 y - 4 \ScFSm}
  \; > \; 0.
\]

Such a $\ScFSm$ exists, since both the numerator and denominator tend to strictly positive numbers as $\ScFSm \to 0$.
Recall that $\ScF{x} = \SCC$ for all $x \in (0,1)$.
Let \NUMCANDZ be such that for all $\NUMCAND \geq \NUMCANDZ$,
\begin{align*}
\ScF[\NUMCAND]{\Floor{\ScFSm \cdot (\NUMCAND-1)}} & \leq \ScF[\NUMCAND]{\Ceiling{\frac{\ScFSm}{2} \cdot (\NUMCAND-1)}} \leq \SCC + \delta, \\
  \ScF[\NUMCAND]{\Ceiling{(1-\ScFSm) (\NUMCAND-1)}} & \geq \ScF[\NUMCAND]{\Floor{(1-\frac{\ScFSm}{2}) (\NUMCAND-1)}} \geq \SCC - \delta,\\
\Ceiling{y \cdot (\NUMCAND -1)} - \Floor{\ScFSm \cdot (\NUMCAND -1)} & \leq 2 (y-\ScFSm) (\NUMCAND -1).
\end{align*}
Such an \NUMCANDZ exists by basic integer arithmetic and the consistency of \VS applied at $x = \ScFSm/2$ and $x = 1-\ScFSm/2$.

Writing $\ScFSmC = \Floor{\gamma (\NUMCAND-1)}$ and
$\ScFSmCp = \Ceiling{(1-\gamma) (\NUMCAND-1)}$, we get
\begin{align*}
y \cdot \sum_{k=0}^{\Ceiling{y \cdot (\NUMCAND-1)}-1}
  \left( \ScF[\NUMCAND]{k} - \ScF[\NUMCAND]{\Ceiling{y \cdot (\NUMCAND-1)}} \right)
 & \; \leq \; 
  y \cdot \left(
  \sum_{k=0}^{\ScFSmC-1} 1 + 
  \sum_{k=\ScFSmC}^{\Ceiling{y \cdot (\NUMCAND-1)}-1}
  \left( \ScF[\NUMCAND]{\ScFSmC} - \ScF[\NUMCAND]{\ScFSmCp} \right)
  \right)
\\ & \; \leq \; 
  y \cdot \left(\ScFSm \cdot (\NUMCAND-1) + 2 \cdot (y-\ScFSm) \cdot (\NUMCAND-1) \cdot 2\delta \right)
\\ & \; = \; 
  y \cdot (\NUMCAND-1) \cdot \left(\ScFSm + (y-\ScFSm) \cdot 4\delta \right).
\end{align*}


The first inequality uses $y < 1-\ScFSm$ and the monotonicity of \SCF[\NUMCAND],
and the second inequality uses the bounds obtained from consistency of \SCF[\NUMCAND] with respect to \SCF.
To bound the right-hand side of \eqref{eqn:limit-condition},
\begin{align*}
(1-y) \cdot \sum_{k=\NUMCAND-\Ceiling{y \cdot (\NUMCAND-1)}}^{\NUMCAND-1}
  \left( 1 - \ScF[\NUMCAND]{k} \right)
& \; \geq \; (1-y) \cdot \sum_{k=\NUMCAND-\Ceiling{y \cdot (\NUMCAND-1)}}^{\NUMCAND-1}
  \left( 1 - \ScF[\NUMCAND]{\ScFSmC} \right)
\\ 
& \; \geq \; (1-y) \cdot y  \cdot (\NUMCAND-1) \cdot (1-\SCC-\delta).
\end{align*}

The first inequality again used monotonicity of \SCF[\NUMCAND],
and the second used the bounds obtained from the consistency of \SCF[\NUMCAND] with respect to \SCF.
Canceling the common term $y (\NUMCAND-1)$ between the left-hand side and right-hand side,
the right-hand side of \eqref{eqn:limit-condition} is at least as large as the left-hand side whenever
$(1-y) \cdot (1-\SCC-\delta) \geq  \ScFSm + (y-\ScFSm) \cdot 4\delta$.
Solving for $\delta$, this is equivalent to
\[ \delta \leq \frac{(1-y) \cdot (1-\SCC) - \ScFSm}{1+ 3 y - 4 \ScFSm}, \]
which is exactly ensured by our choice of $\ScFSm$ and $\delta$.
This completes the proof.
\end{extraproof}



\section{Conclusions} \label{sec:conclusions}

When candidates are drawn i.i.d.~from the voter distribution,
we showed that whether a positional voting system \VS has
expected constant distortion can be almost fully characterized by its
limiting behavior. In particular,
if the limiting scoring rule is not constant on $(0,1)$,
then \VS has constant expected distortion;
if the limiting scoring rule is a constant other than 1 on $(0,1)$,
then \VS has super-constant expected distortion.
A more subtle condition depending on the ``rate of convergence'' to
the limit rule completes the characterization.

Our Theorem~\ref{thm:main-generalized-borda} currently does not
characterize the order of growth of the distortion.
With some effort, the proof could likely be adapted to the
case where the $y$ in the theorem is a function $y(\NUMCAND)$,
which would allow us to characterize the rate at which the distortion
grows with \NUMCAND.

For specific voting systems, the proof of
Theorem~\ref{thm:main-generalized-borda} can often be adapted
to give tighter bounds.
For example, straightforward modifications of the proof can be used to
show that the distortion of $k$-approval or $k$-veto
(where each voter can veto $k$ candidates) for constant $k$ grow as
$\Omega(\NUMCAND)$.
This matches the $O(\NUMCAND)$ upper bound from
Lemma~\ref{lem:linear-distortion},
giving a tight analysis of the distortion of these voting systems.
Similarly, the sufficiency proof can be adapted
to show that the distortion of Borda Count is at most 16,
for all metric spaces and all \NUMCAND.
When the number of candidates grows large enough,
the expected distortion is in fact bounded by 10.


Our results indicate that if one is concerned about systematic, and
possibly adversarial, bias in which candidates run for office,
randomizing the slate of candidates may be part of a solution approach.
Such an approach can be considered as a step in the direction of 
\emph{lottocracy} and \emph{sortition}
\cite{dowlen:sortition,guerrero:lottocracy,landemore:lottocracy},
in which office holders are directly chosen at random from the
population.
Pure lottocracy does well in terms of \emph{representativeness} of
office holders,
but one of its main drawbacks is the potential lack of competency.
As a broader direction for future research, our work here suggests
devising models that capture the tension between these
two objectives, and would allow for the design of hybrid mechanisms
that navigate the tradeoff successfully.



\subsubsection*{Acknowledgments}
Part of this work was done while Yu Cheng was a student at the University of Southern California.
Yu Cheng was supported in part by Shang-Hua Teng's Simons Investigator Award.
Shaddin Dughmi was supported in part by NSF CAREER Award CCF-1350900 and NSF grant CCF-1423618.
David Kempe was supported in part by NSF grants CCF-1423618 and IIS-1619458.
We would like to thank anonymous reviewers for useful feedback.

\bibliographystyle{plain}
\bibliography{../bibliography/names,../bibliography/conferences,../bibliography/bibliography,../bibliography/voting}

\begin{thebibliography}{10}

\bibitem{anshelevich:ordinal}
Elliot Anshelevich.
\newblock Ordinal approximation in matching and social choice.
\newblock {\em ACM SIGecom Exchanges}, 15(1):60--64, July 2016.

\bibitem{anshelevich:bhardwaj:postl}
Elliot Anshelevich, Onkar Bhardwaj, and John Postl.
\newblock Approximating optimal social choice under metric preferences.
\newblock In {\em Proc. 29th AAAI Conf. on Artificial Intelligence}, pages
  777--783, 2015.

\bibitem{anshelevich:postl:randomized}
Elliot Anshelevich and John Postl.
\newblock Randomized social choice functions under metric preferences.
\newblock In {\em Proc. 25th Intl. Joint Conf. on Artificial Intelligence},
  pages 46--59, 2016.

\bibitem{arrow:social-choice}
Kenneth Arrow.
\newblock {\em Social Choice and Individual Values}.
\newblock Wiley, 1951.

\bibitem{barbera:social-choice}
Salvador Barber\`{a}.
\newblock An introduction to strategy-proof social choice functions.
\newblock {\em Social Choice and Welfare}, 18:619--653, 2001.

\bibitem{barbera:gul:stacchetti}
Salvador Barber\`{a}, Faruk Gul, and Ennio Stacchetti.
\newblock Generalized median voter schemes and committees.
\newblock {\em Journal of Economic Theory}, 61:262--289, 1993.

\bibitem{black:rationale}
Duncan Black.
\newblock On the rationale of group decision making.
\newblock {\em J. Political Economy}, 56:23--34, 1948.

\bibitem{black:committees-elections}
Duncan Black.
\newblock {\em The Theory of Committees and Elections}.
\newblock Cambridge University Press, 1958.

\bibitem{BCHLPS:utilitarian:distortion}
Craig Boutilier, Ioannis Caragiannis, Simi Haber, Tyler Lu, Ariel~D. Procaccia,
  and Or~Sheffet.
\newblock Optimal social choice functions: A utilitarian view.
\newblock {\em Artificial Intelligence}, 227:190--213, 2015.

\bibitem{boutilier:rosenschein:incomplete}
Craig Boutilier and Jeffrey~S. Rosenschein.
\newblock Incomplete information and communication in voting.
\newblock In Felix Brandt, Vincent Conitzer, Ulle Endriss, J\'{e}r\^{o}me Lang,
  and Ariel~D. Procaccia, editors, {\em Handbook of Computational Social
  Choice}, chapter~10, pages 223--257. Cambridge University Press, 2016.

\bibitem{caragiannis:procaccia:voting}
Ioannis Caragiannis and Ariel~D. Procaccia.
\newblock Voting almost maximizes social welfare despite limited communication.
\newblock {\em Artificial Intelligence}, 175(9):1655--1671, 2011.

\bibitem{CDK17:of-the-people}
Yu~Cheng, Shaddin Dughmi, and David Kempe.
\newblock Of the people: Voting is more effective with representative
  candidates.
\newblock In {\em Proc. 18th ACM Conf. on Economics and Computation}, pages
  305--322, 2017.

\bibitem{borda:elections}
Jean-Charles de~Borda.
\newblock M\'{e}moire sur les \'{e}lections au scrutin.
\newblock {\em Histoire de l'Acad\'{e}mie Royale des Sciences, Paris}, pages
  657--665, 1784.

\bibitem{dowlen:sortition}
Oliver Dowlen.
\newblock {\em The political potential of sortition: A study of the random
  selection of citizens for public office}.
\newblock Imprint Academic, 2008.

\bibitem{downs:democracy}
Anthony Downs.
\newblock An economic theory of political action in a democracy.
\newblock {\em The Journal of Political Economy}, 65(2):135--150, 1957.

\bibitem{feldman:fiat:golomb}
Michal Feldman, Amos Fiat, and Iddan Golomb.
\newblock On voting and facility location.
\newblock arXiv:1512.05868, 2015.

\bibitem{fraenkel:grofman:borda-count}
Jon Fraenkel and Bernard Grofman.
\newblock The {Borda} count and its real-world alternatives: Comparing scoring
  rules in {Nauru} and {Slovenia}.
\newblock {\em Australian Journal of Political Science}, 49(2):186--205, 2014.

\bibitem{gibbard:manipulation}
Alan~F. Gibbard.
\newblock Manipulation of voting schemes: a general result.
\newblock {\em Econometrica}, 41(4):587--601, 1973.

\bibitem{goel:krishnaswamy:munagala}
Ashish Goel, Anilesh~Kollagunta Krishnaswamy, and Kamesh Munagala.
\newblock Metric distortion of social choice rules: Lower bounds and fairness
  properties.
\newblock In {\em Proc. 18th ACM Conf. on Economics and Computation}, pages
  287--304, 2017.

\bibitem{guerrero:lottocracy}
Alexander~A. Guerrero.
\newblock Against elections: The lottocratic alternative.
\newblock {\em Philosophy \& Public Affairs}, 42(2):135--178, 2014.

\bibitem{landemore:lottocracy}
H\'{e}l\`{e}ne Landemore.
\newblock Deliberation, cognitive diversity, and democratic inclusiveness: An
  epistemic argument for the random selection of representatives.
\newblock {\em Synthese}, 190(7):1209--1231, 2013.

\bibitem{merrill:grofman}
Samuel Merrill and Bernard Grofman.
\newblock {\em A unified theory of voting: Directional and proximity spatial
  models}.
\newblock Cambridge University Press, 1999.

\bibitem{moulin:single-peak}
Herv\'{e} Moulin.
\newblock On strategy-proofness and single peakedness.
\newblock {\em Public Choice}, 35:437--455, 1980.

\bibitem{procaccia:approximation:gibbard}
Ariel~D. Procaccia.
\newblock Can approximation circumvent {Gibbard-Satterthwaite}?
\newblock In {\em Proc. 24th AAAI Conf. on Artificial Intelligence}, pages
  836--841, 2010.

\bibitem{procaccia:rosenschein:distortion}
Ariel~D. Procaccia and Jeffrey~S. Rosenschein.
\newblock The distortion of cardinal preferences in voting.
\newblock In {\em Proc. 10th Intl. Workshop on Cooperative Inform. Agents X},
  pages 317--331, 2006.

\bibitem{reilly:south-seas}
Benjamin Reilly.
\newblock Social choice in the {South Seas}: Electoral innovation and the
  {Borda} count in the {Pacific Island} countries.
\newblock {\em International Political Science Review}, 23(4):355--372, 2002.

\bibitem{richards:richards:mckay}
Diana Richards, Whitman~A. Richards, and Brendan McKay.
\newblock Collective choice and mutual knowledge structures.
\newblock {\em Advances in Complex Systems}, 1:221--236, 1998.

\bibitem{satterthwaite:voting}
Mark~A. Satterthwaite.
\newblock Strategy-proofness and {Arrow}’s conditions: Existence and
  correspondence theorems for voting procedures and social welfare functions.
\newblock {\em Journal of Economic Theory}, 10:187--217, 1975.

\end{thebibliography}

%

\end{document}